\DeclareFontShape{T1}{lmr}{bx}{sc} { <-> ssub * cmr/bx/sc }{}
\tikzset{->, >=Stealth, every state/.style={thick}}
\newcommand*{\tikzmk}[1]{\tikz[remember picture,overlay,] \node (#1) {};\ignorespaces}
\newcommand{\boxit}[1]{\tikz[remember picture,overlay]{\node[xshift=-38.5pt,yshift=2.5pt,fill=#1,opacity=.2,fit={(A)($(B)+(.97\linewidth,.8\baselineskip)$)}] {};}\ignorespaces}
\newcommand{\boxitt}[1]{\tikz[remember picture,overlay]{\node[xshift=4pt,yshift=2.5pt,fill=#1,opacity=.2,fit={(A)($(B)+(-5pt,0)$)}] {};}\ignorespaces}
\newcommand{\PreserveBackslash}[1]{\let\temp=\\#1\let\\=\temp}
\newcolumntype{C}[1]{>{\PreserveBackslash\centering}p{#1}}
\newcolumntype{R}[1]{>{\PreserveBackslash\raggedleft}p{#1}}
\newcolumntype{L}[1]{>{\PreserveBackslash\raggedright}p{#1}}
\newtheorem{mytheorem}{Theorem}[section]
\newtheorem{mylemma}[mytheorem]{Lemma}
\newtheorem{myproposition}[mytheorem]{Proposition}
\newtheorem{mynotation}[mytheorem]{Notation}
\newtheorem{myremark}[mytheorem]{Remark}
\newtheorem{myexample}[mytheorem]{Example}
\newtheorem{mydefinition}[mytheorem]{Definition}
\newtheorem{myproblem}[mytheorem]{Problem}
\crefname{mytheorem}{Thm.}{Thms}
\crefname{mydefinition}{Def.}{Defs}
\crefname{myproposition}{Prop.}{Props}
\crefname{myremark}{Rem.}{Remarks}
\crefname{mylemma}{Lem.}{Lemmas}
\crefname{myexample}{Example}{Examples}
\crefname{myproof}{Proof.}{Proofs}
\crefname{myproblem}{Prob.}{Problems}
\crefname{myexample}{Ex.}{Exs}
\crefname{appendix}{Appendix}{Appendixes}
\crefname{algorithm}{Alg.}{Algs}
\crefname{figure}{Fig.}{Figs}
\Crefname{equation}{}{}
\crefname{table}{Table}{Tables}
\crefname{line}{Line}{Lines}
\newline\textbf{BEGIN: AUX-PROOF}\dotfill\newline}
\newline\textbf{END: AUX-PROOF}\dotfill\newline}
\newcommand{\myparagraph}[1]{\smallskip\noindent \textbf{#1}}
\newcommand{\myparagraphii}[2]{\smallskip\textit{#1} #2}
\newcommand{\sg}{{\mathcal{G}}}
\newcommand{\MaxState}{{S_\vartriangle}}
\newcommand{\MinState}{{S_\triangledown}}
\newcommand{\MaxStateX}{{S'_\vartriangle}}
\newcommand{\MinStateX}{{S'_\triangledown}}
\newcommand{\init}{{s_\mathrm{init}}}
\newcommand{\dist}{\textup{\textsc{Dist}}}
\newcommand{\MaxStrategy}{{\sigma_\vartriangle}}
\newcommand{\MinStrategy}{{\sigma_\triangledown}}
\newcommand{\bellman}{{\mathrm{\Phi}}}
\newcommand{\wpg}{{\mathrm{\Omega}}}
\newcommand{\wpbellman}{{\mathrm{\Psi}}}
\newcommand{\gm}{{\mathrm{\Gamma}}}
\newcommand{\mcbellman}{\bellman}
\newcommand{\mcbellmanMod}{\bellman'}
\newcommand{\mcvaluefunc}{V}
\newcommand{\op}{\mathrm{op}}
\begin{document}
\title{Widest Path Games and Maximality Inheritance in Bounded Value Iteration for Stochastic Games}
\titlerunning{Widest Path Games and Maximality Inheritance in BVI for SGs}
\author{
  Kittiphon Phalakarn\inst{1}\orcidlink{0009-0006-5406-7480}%
  \and%
  Yun Chen Tsai\inst{1,2}\orcidlink{0009-0003-7705-9609}%
  \and%
  Ichiro Hasuo\inst{1,2,3}\orcidlink{0000-0002-8300-4650}%
}
\authorrunning{K.\ Phalakarn et al.}
%
\institute{
  National Institute of Informatics, Tokyo, Japan\\
  \email{\{kphalakarn,yctsai,hasuo\}@nii.ac.jp}
  \and%
  SOKENDAI (The Graduate University for Advanced Studies), Kanagawa, Japan%
  \and
  Imiron Co., Ltd., Tokyo, Japan
}
\maketitle              
\begin{abstract}
For model checking stochastic games (SGs), \emph{bounded value iteration (BVI)} algorithms have gained attention as efficient approximate methods with rigorous precision guarantees. However, BVI may not terminate or converge when the target SG contains end components. Most existing approaches address this issue by explicitly detecting and processing end components---a process that is often computationally expensive. An exception is the \emph{widest path-based BVI} approach previously studied by Phalakarn et al., which we refer to as \emph{1WP-BVI}. The method performs particularly well in the presence of numerous end components. Nonetheless, its theoretical foundations remain somewhat ad hoc. In this paper, we identify and formalize the core principles underlying the widest path-based BVI approach by (i) presenting \emph{2WP-BVI}, a clean BVI algorithm based on \emph{(2-player) widest path games}, and (ii) proving its correctness using what we call the \emph{maximality inheritance principle}---a proof principle previously employed in a well-known result in probabilistic model checking. Our experimental results demonstrate the practical relevance and potential of our proposed 2WP-BVI algorithm.
\keywords{stochastic game \and probabilistic model checking \and bounded value iteration \and fixed point}
\end{abstract}
\section{Introduction}\label{sec:intro}

\myparagraph{Stochastic Games.}
Stochastic games (SGs) are two-player games played on probabilistic transition systems (also called 2.5-player games). On each player's turn, the player chooses an \emph{action} available in the current state, aiming to achieve their \emph{objectives}. The chosen action induces a transition to a successor state, determined by a predefined probabilistic distribution.

When considering the \emph{reachability objective}, the two main players are Maximizer (denoted by $\vartriangle$) and Minimizer (denoted by $\triangledown$). The Maximizer's goal is to maximize the probability of reaching a designated target set from the current state, while the Minimizer's objective is to minimize this probability.

Stochastic games have been studied across various areas of computer science. On the theoretical side, several problems---such as solving games with parity and mean-payoff objectives---have been reduced to SGs~\cite{DBLP:conf/isaac/AnderssonM09,DBLP:journals/corr/abs-1106-1232}. While the problem of solving SGs under the reachability objective remains without a known polynomial-time algorithm, its complexity class is known to be \textbf{UP} $\cap$ \textbf{coUP}~\cite{hoffman1966nonterminating}.

On the practical side, SGs are used to model and analyze control systems under probabilistic uncertainties. These include cyber-physical systems, networked systems, probabilistic programs, and applications in computer security~\cite{DBLP:journals/ejcon/SvorenovaK16}.

\myparagraph{Bounded Value Iteration.} A central goal in the study of SGs is to compute the \emph{optimal} reachability probability---that is, the probability achieved when both Maximizer and Minimizer always take optimal actions. Several techniques exist for this purpose, including quadratic programming (QP), strategy iteration (SI), and value iteration (VI). In practice, when approximate results are acceptable, VI is often preferred due to its performance~\cite{DBLP:conf/tacas/HartmannsJQW23}.

VI follows a simple fixed-point principle, i.e., Kleene's theorem (\cref{thm:kleene}). In brief, VI starts with a function that assigns zero to all states and repeatedly applies the \emph{(reachability) Bellman operator} (\cref{def:reachabilityBellmanOpr}). This produces a sequence of functions---called the \emph{under-approximation sequence}---that converges to the (optimal) reachability probability from below.

However, a drawback of standard VI is that it offers no indication of how close the approximation is to the true value~\cite{DBLP:journals/tcs/HaddadM18}, and therefore provides no guarantee of precision. To address this, \emph{bounded} VI (BVI) has been introduced, where an \emph{over-approximation sequence} is constructed alongside the under-approximation. When the two sequences converge within $2\varepsilon$, their average differs from the true value by at most $\varepsilon$.

Several BVI approaches have been proposed (see related work in \cref{sec:relatedWork}). The main challenge, however, is that the two sequences may fail to converge due to a theoretical gap between the least and greatest fixed points of the Bellman operator. To ensure convergence, it is often necessary to enforce uniqueness of the fixed point---typically by handling \emph{end components}, i.e., sets of states where players can remain indefinitely.

\myparagraph{The Previous Widest Path Approach 1WP-BVI: via MDPs.} We are interested in the \emph{widest path-based BVI approach} proposed in~\cite{DBLP:conf/cav/PhalakarnTHH20} (referred to in this paper as \emph{1WP-BVI}), which offers an alternative means of enforcing fixed-point uniqueness. This approach constructs the over-approximation sequence by computing the \emph{widest path width} of a certain weighted directed graph derived from the SG. Notably, it does not rely on end-component analysis---a particularly significant advantage in the context of SGs (as opposed to Markov decision processes (MDPs)), since end components in SGs cannot be identified solely through graph-based analysis. Indeed, the algorithm demonstrates superior performance, especially in SGs with many end components~\cite{DBLP:conf/cav/PhalakarnTHH20}.

However, both the algorithm and the correctness proof of 1WP-BVI are rather intricate. This is somewhat unsatisfying, especially given that VI is based on a simple fixed-point principle. As our goal is to establish another simple fixed-point property---i.e., uniqueness---it is natural to seek a corresponding mathematical principle that clearly explains why the widest path-based BVI works.

Moreover, although not explicitly stated, 1WP-BVI is primarily designed for MDPs rather than SGs. For SGs, the approach requires an additional mechanism referred to as \emph{player reduction} (\cref{subsec:compare}), which depends on the outcome of the under-approximation sequence. This dependency makes the over-approximation sequence non-standalone and contributes to the complication of the correctness proof. Due to the use of player reduction in~\cite{DBLP:conf/cav/PhalakarnTHH20}, which reduces the number of players from two to \emph{one}, we refer to their algorithm as 1WP-BVI (with ``1'').

\myparagraph{Identifying the Principle Behind Widest Path-Based BVI, and a New Algorithm 2WP-BVI.} Our main contribution is to identify the mathematical principle underlying the widest path-based BVI approach and to refine both the algorithm and its theoretical foundations. Specifically:
\begin{enumerate}
    \item We present a cleaner and more streamlined algorithm (called \emph{2WP-BVI} (\cref{algo_iiwpg})), based on a construct we call the \emph{(2-player) widest path game}, which avoids the use of player reduction; and
    \item For its correctness proof (\cref{subsec:ccAlg2}), we identify and apply a proof principle that we call the \emph{maximality inheritance principle}, presented in \cref{sec:maxPresPrin}.
\end{enumerate}

Regarding the first key ingredient---(2-player) widest path games (WPGs)---we find that their close structural similarity to SGs naturally leads to the use of the maximality inheritance principle. Furthermore, we show that WPGs can be solved not only using fixed-point methods (such as Kleene iteration or the Bellman--Ford algorithm), but also via an enhanced Dijkstra-like algorithm. Isolating this game structure gives 2WP-BVI theoretical advantages over 1WP-BVI.

Regarding the second key ingredient---the maximality inheritance principle---it is inspired by a classical result in probabilistic model checking~\cite[Thm.~10.19]{DBLP:books/daglib/0020348}: in a Markov chain (MC), one can enforce the uniqueness of fixed points of the Bellman operator by ``removing'' states with reachability probability $0$.

The proof of~\cite[Thm.~10.19]{DBLP:books/daglib/0020348} proceeds by contradiction, and is outlined as follows (with a detailed version provided in \cref{sec:maxPresPrin}). Assuming a gap exists between the least and the greatest fixed points, we select a \emph{gap maximizer}---a state where the gap is maximal. (The finiteness of the state space is really needed here; a separating example is in \cref{remark:finiteness}.) We then observe that \emph{maximality} (i.e., being a gap maximizer) \emph{is inherited} by successor states, leading to the conclusion that all reachable states from a gap maximizer are also gap maximizers. This yields a contradiction: a gap maximizer, by definition, cannot be a target state; thus the above argument implies that it must have reachability probability $0$; but such a state must have been removed by assumption.

In \cref{subsec:ccAlg2}, we apply this maximality inheritance principle to prove the correctness of our algorithm 2WP-BVI (\cref{algo_iiwpg}). We highlight a clear similarity between our proof and the one for~\cite[Thm.~10.19]{DBLP:books/daglib/0020348}.

\myparagraph{Contributions.} Our technical contributions are summarized as follows:
\begin{enumerate}
    \item We present 2WP-BVI, a clean BVI algorithm for stochastic games (SGs) based on the concept of widest path width. The algorithm is formalized using a construct called \emph{(2-player) widest path games}.
    \item We formulate the \emph{maximality inheritance principle} and demonstrate its application both in a classical result~\cite[Thm.~10.19]{DBLP:books/daglib/0020348} and in the correctness proof of 2WP-BVI. This provides a principled foundation that improves upon the ad hoc 1WP-BVI~\cite{DBLP:conf/cav/PhalakarnTHH20}.
    \item We evaluate 2WP-BVI via experiments (\cref{sec:exp}). As expected, it performs particularly well on SGs with many end components. Also, it significantly outperforms other tools on a class of benchmarks. This confirms the relevance and potential of 2WP-BVI.
\end{enumerate}

\myparagraph{1WP-BVI vs. 2WP-BVI.} Three main differences between 1WP-BVI and 2WP-BVI are highlighted below. A detailed comparison is provided in \cref{subsec:compare}.
\begin{enumerate}
    \item \emph{Computation of the over-approximation sequence.} 2WP-BVI constructs this sequence by solving (2-player) widest path games (\cref{algo_wpg}), whereas 1WP-BVI applies player reduction and then computes (1-player) widest path widths.
    \item \emph{Underlying principle for the convergence proof.} The fixed-point uniqueness of 2WP-BVI is by the maximality inheritance principle, but the proof of 1WP-BVI in~\cite{DBLP:conf/cav/PhalakarnTHH20} is by the infinitary pigeonhole principle and is non-constructive.
    \item \emph{Dependency between under- and over-approximations.} 2WP-BVI does not have such dependency, unlike 1WP-BVI.
\end{enumerate}

\myparagraph{Organization.} The paper is organized as follows: \cref{sec:prelim} reviews some backgrounds. The maximality inheritance principle is introduced in \cref{sec:maxPresPrin}, using~\cite[Thm.~10.19]{DBLP:books/daglib/0020348} as a showcase. In \cref{sec:WPG}, we define (2-player) widest path games and present a Dijkstra-type algorithm to solve them. Building on these key ingredients, we describe our 2WP-BVI algorithm in \cref{sec:WPGBVI} and prove its correctness and convergence. Related works are reviewed in \cref{sec:relatedWork}. The experimental results are in \cref{sec:exp}; in \cref{sec:conclusion} we conclude.


\section{Preliminaries}\label{sec:prelim}

\begin{mynotation}
For a finite set $S$, the set of all functions from $S$ to $[0,1]$ is denoted by $[0,1]^S$. The set of all discrete probability distributions on $S$ is denoted by $\dist(S) := \{d \in [0,1]^S : \sum_{s \in S} d(s) = 1\}$.
\end{mynotation}

\subsection{Stochastic Games}\label{subsec:SG}

\begin{mydefinition}[stochastic game $\sg$]\label{def:SG} A \emph{stochastic game (SG)} is $\sg = (S, \MaxState,\linebreak\MinState, \init, A, \textup{\textsc{Av}}, \delta, T)$, where $S = \MaxState \uplus \MinState$ is a finite set of \emph{states} partitioned into \emph{Maximizer's $(\MaxState)$} and \emph{Minimizer's $(\MinState)$} states, $\init \in S$ is an \emph{initial} state, $A$ is a finite set of \emph{actions}, $\textup{\textsc{Av}} : S \to 2^A \setminus \{\emptyset\}$ defines \emph{available} actions at each state, $\delta : S \times A \to \dist(S)$ is a \emph{transition function}, and $T \subseteq S$ is a \emph{target} set.
\end{mydefinition}

The semantics of SGs~\cite{DBLP:conf/cav/KelmendiKKW18} is summarized as follows. Let $\textsc{Post}(s,a) := \{s' : \delta(s,a,s') > 0\}$. An \emph{infinite path} is $\rho = s_0 a_0 s_1 a_1 \ldots \in (S \times A)^\omega$ where $a_i \in \textsc{Av}(s_i)$ and $s_{i+1} \in \textsc{Post}(s_i,a_i)$ for $i \in \mathbb{N}$. A \emph{strategy for Maximizer} is $\MaxStrategy : \MaxState \to A$ such that $\MaxStrategy(s) \in \textsc{Av}(s)$ for $s \in \MaxState$. A \emph{strategy for Minimizer} $\MinStrategy : \MinState \to A$ is similarly defined. We only consider strategies of this form (i.e., memoryless pure strategies) as they are complete for finite SGs with the reachability objective~\cite{DBLP:journals/iandc/Condon92}.

Given an SG $\sg$, a pair of strategies $(\MaxStrategy,\MinStrategy)$ induces a Markov chain $\sg^{(\MaxStrategy,\MinStrategy)}$ with a transition function $\delta^{(\MaxStrategy,\MinStrategy)} : S \to \dist(S)$ such that $\delta^{(\MaxStrategy,\MinStrategy)}(s,s') = \delta(s,\MaxStrategy(s),s')$ if $s \in \MaxState$ and $\delta(s,\MinStrategy(s),s')$ if $s \in \MinState$. The Markov chain assigns to each $s \in S$ a probability distribution $\mathcal{P}^{(\MaxStrategy,\MinStrategy)}_s$ over $S^\omega$. For each measurable subset $X \subseteq S^\omega$, $\mathcal{P}^{(\MaxStrategy,\MinStrategy)}_s(X)$ is the probability that, when starting at $s$, the Markov chain generates an infinite path which belongs to $X$.

For the \emph{reachability objective} with a target set $T \subseteq S$, we consider the measurable subset of infinite paths which contain a state in $T$, denoted by $\lozenge T$. We are interested in $\mathcal{P}^{(\MaxStrategy,\MinStrategy)}_s(\lozenge T)$ when both players play optimally, defined in the following definition as the \emph{(reachability) value function}. Informally, Maximizer aims to maximize the reachability probability to $T$ while Minimizer aims to minimize it. As proved in~\cite{DBLP:journals/iandc/Condon92}, the player order of choosing a strategy is irrelevant.

\begin{mydefinition}[(reachability) value function $V$]\label{def:reachabilityValueFunc}
Let $\sg$ be an SG as in \cref{def:SG}. The \emph{(reachability) value function} is the function $V : S \to [0,1]$ defined by $\displaystyle V(s) := \max_\MaxStrategy \min_\MinStrategy \mathcal{P}^{(\MaxStrategy,\MinStrategy)}_s(\lozenge T) = \min_\MinStrategy \max_\MaxStrategy \mathcal{P}^{(\MaxStrategy,\MinStrategy)}_s(\lozenge T)$.
\end{mydefinition}

\begin{myexample}\label{ex:SG}
\cref{fig:ex_sg} (left) shows an example of SG with $T = \{s_2\}$. Its (reachability) value function is $V : s_0 \mapsto \text{\sfrac{4}{5}} \mid s_1 \mapsto \text{\sfrac{3}{5}} \mid s_2 \mapsto 1 \mid s_3,s_4,s_5 \mapsto 0$ under a pair of optimal strategies $(\MaxStrategy,\MinStrategy)$ where $\MaxStrategy(\cdot) = \alpha$ and $\MinStrategy(\cdot) = \beta$.
\end{myexample}

\begin{figure}[t]
    \centering
    \scalebox{.8}{\begin{tikzpicture}[
    maximizer/.style={regular polygon,regular polygon sides=3, inner sep=1pt, anchor=north},
    minimizer/.style={maximizer, shape border rotate=180}
]
    \node [state, maximizer] at (0,0) (s0) {$s_0$};
    \node [state, minimizer] at (2,0) (s1) {$s_1$};
    \node [state, maximizer, accepting] at (4,0) (s2) {$s_2$};
    \node [state, minimizer] at (0,-2) (s3) {$s_3$};
    \node [state, maximizer] at (2,-2) (s4) {$s_4$};
    \node [state, minimizer] at (4,-2) (s5) {$s_5$};

    \path ($(s0.150) - (3ex,0)$) edge (s0.150);
    \path (s0.30) edge node[above, pos=0.1]{$\alpha$} node[above, pos=0.75]{\sfrac{1}{3}} coordinate[pos=0.4](s01) (s1.210);
    \path (s01) edge[out=100, in=50, looseness=2] node[above left=-1.5pt, pos=0.8]{\sfrac{1}{3}} (s0.75);
    \path (s01) edge[out=60, in=140, looseness=0.9] node[above=2pt, pos=0.93]{\sfrac{1}{3}} (s2.130);
    \path (s0.252) edge node[left, pos=0.25]{$\beta$} (s3.108);
    \path (s1.330) edge node[above, pos=0.22]{$\alpha$} (s2.147);
    \path (s1) edge node[left=1pt, pos=0.15]{$\beta$} node[right=2pt, pos=0.9]{\sfrac{1}{4}} coordinate[pos=0.5](s13) (s3.30);
    \path (s13) edge node[above, pos=0.4]{\sfrac{3}{4}} (s0);
    \path (s1) edge node[right, pos=0.15]{$\gamma$} node[right=-1pt, pos=0.7]{\sfrac{1}{5}} coordinate[pos=0.5](s14) (s4);
    \path (s14) edge node[above, pos=0.75]{\sfrac{4}{5}} (s2.210);
    \path (s2) edge[out=15, in=40, looseness=15] node[right]{$\alpha$} (s2);
    \path (s3.72) edge node[right, pos=0.2]{$\alpha$} (s0.288);
    \path (s3.310) edge node[below, pos=0.1]{$\beta$} (s4.170);
    \path (s4.130) edge node[above, pos=0.1]{$\alpha$} (s3.350);
    \path (s4.30) edge node[below, pos=0.28]{$\beta$} (s5.210);
    \path (s5) edge node[right, pos=0.2]{$\alpha$} node[right=-1pt, pos=0.7]{\sfrac{1}{2}} coordinate[pos=0.5](s52) (s2);
    \path (s52) edge node[below, pos=0.78]{\sfrac{1}{2}} (s4.78);
    \path (s5) edge[out=320, in=345, looseness=15] node[right]{$\beta$} (s5);
\end{tikzpicture}}\qquad\scalebox{.8}{\begin{tikzpicture}[
    maximizer/.style={regular polygon,regular polygon sides=3, inner sep=1pt, anchor=north},
    minimizer/.style={maximizer, shape border rotate=180}
]
    \node [state, maximizer] at (0,0) (s0) {$s_0$};
    \node [state, minimizer] at (2,0) (s1) {$s_1$};
    \node [state, maximizer, accepting] at (4,0) (s2) {$s_2$};
    \node [state, minimizer] at (0,-2) (s3) {$s_3$};
    \node [state, maximizer] at (2,-2) (s4) {$s_4$};
    \node [state, minimizer] at (4,-2) (s5) {$s_5$};

    \path ($(s0.150) - (3ex,0)$) edge (s0.150);
    \path (s0.30) edge node[above, pos=0.17]{\sfrac{1}{5}} coordinate[pos=0.4](s01) (s1.210);
    \path[-] (s0.30) edge[ultra thick] (s01);
    \path (s01) edge[out=100, in=50, looseness=2] (s0.75);
    \path (s01) edge[out=60, in=140, looseness=0.9, ultra thick] (s2.130);
    \path (s0.252) edge node[left, pos=0.25]{\sfrac{3}{5}} (s3.108);
    \path (s1.330) edge node[above, pos=0.27]{\sfrac{2}{5}} (s2.147);
    \path (s1) edge node[left=1pt, pos=0.15]{\sfrac{3}{20}} coordinate[pos=0.5](s13) (s3.30);
    \path[-] (s1) edge[ultra thick] (s13);
    \path (s13) edge[ultra thick] (s0);
    \path (s1) edge node[right, pos=0.15]{\sfrac{12}{25}} coordinate[pos=0.5](s14) (s4);
    \path (s14) edge (s2.210);
    \path (s2) edge[out=15, in=40, looseness=15] node[right]{\sfrac{2}{5}} (s2);
    \path (s3.72) edge node[right, pos=0.2]{0} (s0.288);
    \path (s3.310) edge[ultra thick] node[below, pos=0.1]{\sfrac{4}{5}} (s4.170);
    \path (s4.130) edge[ultra thick] node[above, pos=0.1]{\sfrac{3}{5}} (s3.350);
    \path (s4.30) edge node[below, pos=0.28]{1} (s5.210);
    \path (s5) edge node[right, pos=0.2]{\sfrac{3}{5}} coordinate[pos=0.5](s52) (s2);
    \path (s52) edge (s4.78);
    \path (s5) edge[out=320, in=345, looseness=15, ultra thick] node[right]{1} (s5);
\end{tikzpicture}}
    \caption{Left: an example of SG. The labels $1$ for $\delta(s,a,s') = 1$ are omitted. Right: the WPG constructed from the SG and $u : s_i \mapsto \text{\sfrac{$i$}{5}}$ for $0 \leq i \leq 5$. Each $(s,a)$ is labeled by its width $\phi_u(s,a)$. Players' optimal strategies for the widest path objective are in bold.}
    \label{fig:ex_sg}\label{fig:cwpvf}
\end{figure}
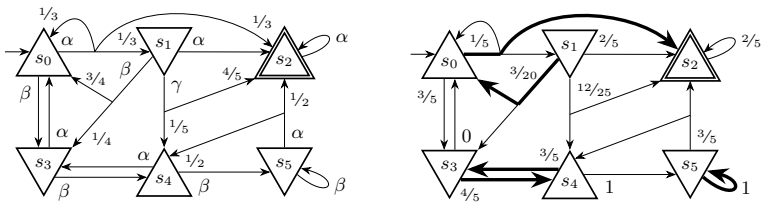

\subsection{Fixed Points in a Complete Lattice}\label{subsec:fixedPtPrelim}

We recall some backgrounds on lattices. Although their usages are not limited to probabilistic model checking, we use SGs here for intuitions and illustrations.

Our focus in this paper is the complete lattice $(L = [0,1]^S, \preccurlyeq)$ where $f \preccurlyeq f'$ iff $f(s) \leq f'(s)$ for all $s \in S$. Its least element is $\bot : S \to [0,1]$ with $\bot(s)=0$ and its greatest element is $\top : S \to [0,1]$ with $\top(s)=1$ for all $s \in S$. A function $\gm : L \to L$ is \emph{monotone} if $f \preccurlyeq f'$ implies $\gm f \preccurlyeq \gm f'$. Also, $f \in L$ is a \emph{fixed point} of $\gm : L \to L$ if $f = \gm f$. Below we state two well-known theorems on fixed points.

\begin{mydefinition}[Cousot--Cousot sequence]\label{def:CCseq}
Let $(L, \preccurlyeq)$ be a complete lattice and $\gm : L \to L$ be a monotone function. The \emph{bottom-up Cousot--Cousot sequence} and the \emph{top-down Cousot--Cousot sequence} are the (transfinite) sequences
\begin{align*}
& \bot \preccurlyeq \gm\bot \preccurlyeq \gm^2\bot \preccurlyeq \cdots \preccurlyeq \gm^\omega\bot \preccurlyeq \gm^{\omega+1}\bot \preccurlyeq \cdots \preccurlyeq \gm^\alpha\bot \preccurlyeq \cdots \quad\text{and}
\\
&
\top \succcurlyeq \gm\top \succcurlyeq \gm^2\top \succcurlyeq \cdots \succcurlyeq \gm^\omega\top \succcurlyeq \gm^{\omega+1}\top \succcurlyeq \cdots \succcurlyeq \gm^\alpha\top \succcurlyeq \cdots,
\end{align*}
where $\gm^{\alpha+1}\bot = \gm(\gm^{\alpha}\bot)$ and $\gm^{\alpha+1}\top = \gm(\gm^{\alpha}\top)$ for a successor ordinal $\alpha + 1$, $\gm^\alpha\bot = \sup\{\gm^\beta\bot : \beta < \alpha\}$ and $\gm^\alpha\top = \inf\{\gm^\beta\top : \beta < \alpha\}$ for a limit ordinal $\alpha$.
\end{mydefinition}

\begin{mytheorem}[Cousot--Cousot~\cite{DBLP:conf/popl/CousotC77}]\label{thm:cousot}
Assume the setting of \cref{def:CCseq}. The bottom-up Cousot--Cousot sequence stabilizes, i.e., there is an ordinal $\alpha_0$ such that $\gm^{\alpha_0}\bot = \gm^{\alpha_0+1}\bot = \cdots$. Its limit $\gm^{\alpha_0}\bot$ is the least fixed point (lfp) $\mu\gm$ of $\gm$.

Dually, the top-down Cousot--Cousot sequence stabilizes: $\gm^{\alpha'_0}\top = \gm^{\alpha'_0+1}\top = \cdots$ with some $\alpha'_0$. Its limit $\gm^{\alpha'_0}\top$ is the greatest fixed point (gfp) $\nu\gm$ of $\gm$.
\end{mytheorem}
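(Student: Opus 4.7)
The plan is to prove the bottom-up part of the theorem; the top-down statement follows by the order-dual argument, replacing $\bot$ with $\top$, $\preccurlyeq$ with $\succcurlyeq$, and suprema with infima. The proof divides into four steps: establishing monotonicity of the transfinite sequence, proving it stabilizes at some ordinal, verifying that its limit is a fixed point, and showing it is the least one.

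First, I would show by transfinite induction on $\alpha$ that $\gm^\beta\bot \preccurlyeq \gm^\alpha\bot$ whenever $\beta \leq \alpha$. The seed is $\bot \preccurlyeq \gm\bot$, which holds because $\bot$ is the least element of $L$. Successor steps propagate the inequality by applying $\gm$ and invoking its monotonicity, while the limit case is immediate from the definition $\gm^\alpha\bot := \sup\{\gm^\beta\bot : \beta<\alpha\}$.

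Second, I would argue that the sequence must stabilize. Since $L$ is a set, pick an ordinal $\kappa$ with $|\kappa|>|L|$; the values $\gm^\beta\bot$ for $\beta<\kappa$ cannot all be distinct, so there exist $\beta<\gamma<\kappa$ with $\gm^\beta\bot = \gm^\gamma\bot$. Monotonicity from the previous step forces the entire subsequence from $\beta$ to $\gamma$ to be constant, so in particular $\gm^\beta\bot = \gm^{\beta+1}\bot$. Taking $\alpha_0$ minimal with this property, a straightforward transfinite induction shows the sequence is constant from $\alpha_0$ onward, and the equality $\gm(\gm^{\alpha_0}\bot) = \gm^{\alpha_0+1}\bot = \gm^{\alpha_0}\bot$ directly certifies $\gm^{\alpha_0}\bot$ as a fixed point.

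Finally, to identify it as the least fixed point, I would show by transfinite induction on $\alpha$ that $\gm^\alpha\bot \preccurlyeq f$ holds for every fixed point $f = \gm f$. The base case uses that $\bot$ is the least element; the successor step applies monotonicity of $\gm$ to $\gm^\alpha\bot \preccurlyeq f$, yielding $\gm^{\alpha+1}\bot = \gm(\gm^\alpha\bot) \preccurlyeq \gm f = f$; the limit step uses that $f$ is an upper bound of $\{\gm^\beta\bot : \beta<\alpha\}$ and hence bounds the supremum. Instantiating at $\alpha=\alpha_0$ yields $\mu\gm = \gm^{\alpha_0}\bot$. The principal obstacle is not mathematical depth but notational bookkeeping in the transfinite inductions, especially at limit ordinals where one must correctly handle suprema of transfinite chains under the (generally non-continuous) operator $\gm$; continuity would yield stabilization already at $\omega$, but the theorem as stated makes no such assumption and thus requires the cardinality-based stabilization argument above.
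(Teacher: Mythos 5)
The paper does not prove this theorem---it is quoted from Cousot--Cousot~\cite{DBLP:conf/popl/CousotC77} as a known result, so there is no in-paper proof to compare against. Your argument is the standard one for the transfinite iteration theorem and is correct: the chain is increasing by transfinite induction (with the successor case split on whether the preceding ordinal is itself a successor or a limit), a cardinality bound on $L$ forces stabilization without any continuity assumption, stabilization immediately certifies a fixed point, and a second transfinite induction shows every fixed point dominates the whole chain; the gfp statement follows by order duality.
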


\begin{mytheorem}[(special case of) Kleene~\cite{DBLP:journals/dm/Baranga91}]\label{thm:kleene}
Let $(L,\preccurlyeq)$ be a complete lattice and $\gm : L \to L$ be an $\omega$-continuous function (i.e., $\gm(\sup L') = \sup\{\gm f : f \in L'\}$ for any increasing $\omega$-chain $L' \subseteq L$). Then, the \emph{bottom-up Kleene sequence} $\bot \preccurlyeq \gm\bot \preccurlyeq \gm^{2}\bot \preccurlyeq \cdots$ stabilizes at $\omega$ $(i.e., \gm^{\omega}\bot = \gm^{\omega+1}\bot = \cdots)$. Its limit $\gm^\omega\bot$ is the lfp $\mu\gm$ of $\gm$.

Dually, let $\gm$ be an $\omega^{\op}$-cocontinuous function (i.e., $\gm(\inf L') = \inf\{\gm f : f \in L'\}$ for any decreasing $\omega$-chain $L' \subseteq L$). Then, the \emph{top-down Kleene sequence} $\top \succcurlyeq \gm\top \succcurlyeq \gm^{2}\top \succcurlyeq \cdots$ stabilizes at $\omega$, with $\gm^\omega\top = \nu\gm$ giving the gfp.
\end{mytheorem}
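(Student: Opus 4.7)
The plan is to follow the classical three-step Kleene recipe: show that the bottom-up Kleene sequence is an ascending $\omega$-chain, that its supremum is a fixed point of $\gm$, and that it is below every fixed point. First I would argue that $\omega$-continuity as stated implies monotonicity. Given $f \preccurlyeq f'$, the two-element family $\{f, f'\}$ is an ascending $\omega$-chain with supremum $f'$, so $\omega$-continuity gives $\gm f' = \sup\{\gm f, \gm f'\}$, whence $\gm f \preccurlyeq \gm f'$. With monotonicity in hand, the base inequality $\bot \preccurlyeq \gm \bot$ (which holds because $\bot$ is the least element of $L$) lifts by induction to $\gm^n\bot \preccurlyeq \gm^{n+1}\bot$ for all $n \in \mathbb{N}$, so the Kleene sequence is indeed an increasing $\omega$-chain and its supremum $\gm^\omega\bot$ is well-defined in the complete lattice $L$.

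Next I would verify the fixed-point property by a direct computation using $\omega$-continuity on the chain $(\gm^n\bot)_{n \in \mathbb{N}}$:
\[
\gm(\gm^\omega\bot) \;=\; \gm\bigl(\sup_{n \in \mathbb{N}} \gm^n\bot\bigr) \;=\; \sup_{n \in \mathbb{N}} \gm^{n+1}\bot \;=\; \sup_{n \in \mathbb{N}} \gm^n\bot \;=\; \gm^\omega\bot,
\]
where the third equality uses that dropping the initial term $\bot$ from an ascending chain does not change its supremum. This identity immediately yields stabilization at $\omega$: $\gm^{\omega+1}\bot = \gm(\gm^\omega\bot) = \gm^\omega\bot$, and an induction on successor ordinals extends this to $\gm^{\omega+k}\bot = \gm^\omega\bot$ for all $k$. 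For leastness I would take an arbitrary fixed point $f$ of $\gm$ and prove $\gm^n\bot \preccurlyeq f$ by induction on $n$: the base case follows from $\bot \preccurlyeq f$, and the inductive step from monotonicity combined with $\gm f = f$. Passing to the supremum gives $\gm^\omega\bot \preccurlyeq f$, so $\gm^\omega\bot = \mu\gm$.

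The dual half of the statement---concerning $\omega^{\op}$-cocontinuity, the top-down Kleene sequence, and the greatest fixed point $\nu\gm$---follows by the evident order-reversal: swap $\preccurlyeq$ with $\succcurlyeq$, $\bot$ with $\top$, suprema with infima, and replace the $\omega$-continuity argument with its cocontinuous counterpart. There is no real obstacle in the proof: it is essentially a countable specialization of \cref{thm:cousot}, with the key gain being that the transfinite recursion halts at the very first limit ordinal thanks to continuity. The only mildly delicate point is the preliminary observation that $\omega$-(co)continuity entails monotonicity, which is handled by the two-element chain trick above and is what allows the inductive step to produce a chain in the first place.
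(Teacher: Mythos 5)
Your proof is correct: it is the standard Kleene argument (derive monotonicity from $\omega$-continuity via an eventually-constant two-element chain, show the Kleene chain is ascending, apply continuity to get the fixed-point identity, and use induction for leastness), and the dual half indeed follows by order reversal. The paper itself offers no proof of this theorem---it is stated as a known preliminary with a citation---so there is nothing to compare against, but your argument is the expected one and has no gaps.
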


The bottom-up sequence is the essence behind the \emph{value iteration} technique in probabilistic model checking (\cref{sec:intro}). Thus, in this work, we generally refer to the iterative construction of such a sequence as \emph{value iteration}. Note that the bottom-up and top-down sequences are the \emph{under-} and \emph{over-approximation sequences} of $\mu\gm$, respectively (i.e., $\gm^{\alpha}\bot \preccurlyeq \mu\gm \preccurlyeq \nu\gm \preccurlyeq \gm^{\alpha'}\top$ for any ordinals $\alpha$ and $\alpha'$).

\subsection{Bounded Value Iteration}\label{subsec:BVI}

The problem we would like to solve is as follows.

\begin{myproblem}[value function approximation problem]\label{def:prob}
Given an SG $\sg$ and a precision $\varepsilon > 0$, the \emph{value function approximation problem} is to output a value that is $\varepsilon$-close to $V(\init)$ (i.e., the difference from $V(\init)$ is at most $\varepsilon$).
\end{myproblem}

\cref{def:prob} can be practically solved by approximating the lfp of the \emph{(reachability) Bellman operator}. We first introduce the notion of the \emph{state-action expectation} and then define the (reachability) Bellman operator based on this notion.

\begin{mydefinition}[state-action expectation $\phi_f(s,a)$]\label{def:stateActionExp}
Let $\sg$ be an SG as in \cref{def:SG}. The \emph{state-action expectation} of a function $f \in [0,1]^S$ for a state-action pair $(s,a) \in S \times A$ is $\phi_f(s,a) := \sum_{s' \in S} \delta(s,a,s') \cdot f(s')$.
\end{mydefinition}

\begin{mydefinition}[(reachability) Bellman operator $\bellman$]\label{def:reachabilityBellmanOpr}
Let $\sg$ be an SG as in \cref{def:SG}. The \emph{(reachability) Bellman operator} is the function $\bellman : [0,1]^S \to [0,1]^S$ where, for $f \in [0,1]^S$,
$$
\begin{array}{l}
    (\bellman f)(s) :=
    \begin{cases}
        1 & \textnormal{if } s \in T, \\
        \max_{a \in \textup{\textsc{Av}}(s)} \phi_f(s,a) & \textnormal{if } s \in \MaxState\setminus T, \textnormal{ and} \\
        \makebox[\widthof{$\max_{a \in \textup{\textsc{Av}}(s)} \phi_f(s,a)$}][r]{$\min_{a \in \textup{\textsc{Av}}(s)} \phi_f(s,a)$} & \textnormal{if } s \in \MinState\setminus T.
    \end{cases}
\end{array}
$$
\end{mydefinition}

\noindent Here is a well-known theorem on SG reachability probability (see, e.g.,~\cite{DBLP:conf/spin/ChatterjeeH08}).

\begin{mytheorem}\label{thm:wellknown}
Let $V$ and $\bellman$ be as in \cref{def:reachabilityValueFunc,def:reachabilityBellmanOpr}. We have $\mu \bellman = V$.
\end{mytheorem}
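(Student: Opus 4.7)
The plan is to split the theorem into the two inequalities $\mu\bellman \preccurlyeq V$ and $V \preccurlyeq \mu\bellman$ and handle them separately. The first, $\mu\bellman \preccurlyeq V$, follows once we exhibit $V$ as a fixed point of $\bellman$: on $s \in T$ both $V(s) = 1$ and $(\bellman V)(s) = 1$; on $s \in \MaxState \setminus T$, unfolding optimal play by one step, together with Condon's theorem~\cite{DBLP:journals/iandc/Condon92} (memoryless pure strategies suffice, and $\max\min = \min\max$), yields $V(s) = \max_{a \in \textsc{Av}(s)} \sum_{s'} \delta(s,a,s') \cdot V(s') = (\bellman V)(s)$, and symmetrically for Minimizer states. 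Since $\mu\bellman$ is by definition the least fixed point, $\mu\bellman \preccurlyeq V$ is immediate.

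For the reverse inequality $V \preccurlyeq \mu\bellman$, I would invoke Kleene's theorem (\cref{thm:kleene}). The first sub-step is to verify that $\bellman$ is $\omega$-continuous: because $S$ and each $\textsc{Av}(s)$ are finite, $\max_a$ and $\min_a$ commute with suprema of increasing $\omega$-chains, and $\phi_f(s,a)$ is linear in $f$; hence $\bellman(\sup_n f_n) = \sup_n \bellman f_n$ and thus $\mu\bellman = \sup_{n \in \mathbb{N}} \bellman^n \bot$. The second sub-step is an induction on $n$ showing that $(\bellman^n \bot)(s)$ equals the bounded-horizon value
\[
V_n(s) \;:=\; \max_{\MaxStrategy}\min_{\MinStrategy}\mathcal{P}^{(\MaxStrategy,\MinStrategy)}_s\bigl(\{\rho : \rho \text{ visits } T \text{ within the first } n \text{ steps}\}\bigr),
\]
the base case $n=0$ being trivial and the inductive step reusing the same one-step unfolding as above. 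Continuity of probability measures along the increasing sequence of events ``reach $T$ within $n$ steps'' then gives $\sup_n V_n(s) = V(s)$, so $V \preccurlyeq \sup_n \bellman^n \bot = \mu\bellman$.

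I expect the main obstacle to be the inductive identification $\bellman^n \bot = V_n$ in the two-player setting: one must argue that optimal bounded-horizon strategies exist and that, as $n \to \infty$, their values converge to the true infinite-horizon optimum rather than merely bounding it. Finiteness of $S$ and of the action sets is essential here, since it guarantees that the $\max$ and $\min$ in $\bellman$ are attained and can be interchanged with countable suprema and infima. As all these ingredients are classical and available in~\cite{DBLP:conf/spin/ChatterjeeH08,DBLP:journals/iandc/Condon92}, the proof in the paper would be kept brief, citing those sources for the routine measure-theoretic and game-theoretic details.
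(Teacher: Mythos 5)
The paper does not actually prove \cref{thm:wellknown}: it states it as a well-known fact and points to~\cite{DBLP:conf/spin/ChatterjeeH08}, so there is no in-paper argument to compare yours against. Your decomposition is the standard textbook route, and the first half is fine: $V=\bellman V$ via one-step unfolding plus Condon's determinacy, hence $\mu\bellman\preccurlyeq V$ by leastness of the lfp.

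The second half, however, glosses over the one step that carries the real mathematical content. Continuity of measure gives $\sup_n \mathcal{P}^{(\MaxStrategy,\MinStrategy)}_s(\lozenge^{\leq n}T)=\mathcal{P}^{(\MaxStrategy,\MinStrategy)}_s(\lozenge T)$ only for a \emph{fixed} strategy pair; to conclude $\sup_n V_n=V$ you must interchange $\sup_n$ with $\max_{\MaxStrategy}\min_{\MinStrategy}$. The $\max_{\MaxStrategy}$ commutes with $\sup_n$ harmlessly, but for the Minimizer the free inequality is $\sup_n\min_{\MinStrategy}(\cdot)\leq\min_{\MinStrategy}\sup_n(\cdot)$, i.e.\ exactly the direction that re-derives $\mu\bellman\preccurlyeq V$ rather than the needed $V\preccurlyeq\mu\bellman$. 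Closing this requires an extra argument, e.g.\ restricting to the finitely many memoryless pure Minimizer strategies (justified by Condon) and using that a pointwise-increasing sequence indexed over a finite family satisfies $\min_{\MinStrategy}\sup_n=\sup_n\min_{\MinStrategy}$; relatedly, your $V_n$ as written quantifies over memoryless strategies, whereas $\bellman^n\bot$ is the value of the $n$-step game under step-dependent (Markov) strategies, so the identification $\bellman^n\bot=V_n$ needs the strategy classes reconciled. You flag both issues as ``the main obstacle'' and defer them to the classical references, which is acceptable for a result the paper itself only cites, but as a standalone proof the argument is not yet complete at precisely that point.
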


The (reachability) Bellman operator $\bellman$ is known to be $\omega$-continuous and thus its lfp can be (under-)approximated using Kleene's theorem (\cref{thm:kleene}): $\bot \preccurlyeq \bellman\bot \preccurlyeq \bellman^2\bot \preccurlyeq \cdots$ converges to $\mu\bellman = V$. Nonetheless, one cannot know how close the approximation $(\bellman^i \bot)(\init)$ is to $V(\init)$~\cite{DBLP:journals/tcs/HaddadM18}. In particular, even when $(\bellman^i \bot)(\init) - (\bellman^{i-1} \bot)(\init) \leq 2\varepsilon$---a common stopping criterion for (non-bounded) VI---the approximation $(\bellman^i \bot)(\init)$ may not be $\varepsilon$-close to $V(\init)$~\cite{DBLP:journals/tcs/HaddadM18}. To overcome this, over-approximations of $\mu\bellman$ are brought into consideration.

\emph{Bounded value iteration (BVI)} (also known as \emph{interval iteration}) computes, in addition to the under-approximation sequence, the over-approximation sequence $\top \succcurlyeq \bellman\top \succcurlyeq \bellman^2\top \succcurlyeq \cdots$ which converges to $\nu\bellman \succcurlyeq V$. If one finds $i \in \mathbb{N}$ such that $(\bellman^i \top)(\init) - (\bellman^i \bot)(\init) \leq 2\varepsilon$, then the value $\frac{1}{2}((\bellman^i \bot)(\init) + (\bellman^i \top)(\init))$ is guaranteed to be $\varepsilon$-close to $V(\init)$. The issue with BVI is that $\nu\bellman$ is not necessarily equal to $\mu\bellman = V$. This means the over-approximation sequence may not converge to $V$ and one may not be able to find $i \in \mathbb{N}$ that satisfies the condition. Several studies have been conducted to modify the over-approximation sequence in order to obtain convergence, e.g., by player reduction and end-component analysis. We refer to~\cite{DBLP:conf/cav/KelmendiKKW18,DBLP:conf/lics/KretinskyMW23,DBLP:conf/cav/PhalakarnTHH20} for details; see also the related works in \cref{sec:relatedWork}.

\begin{myexample}\label{ex:bvi}
Consider the SG in \cref{fig:ex_sg} (left). By representing $f \in [0,1]^S$ as $(f(s_0),\ldots,f(s_5))$, the under-approximation sequence is $\bot = (0,0,0,0,0,0) \preccurlyeq (0,0,1,0,0,0) \preccurlyeq (\text{\sfrac{1}{3}},0,1,0,0,0) \preccurlyeq (\text{\sfrac{4}{9}},\text{\sfrac{1}{4}},1,0,0,0) \preccurlyeq (\text{\sfrac{61}{108}},\text{\sfrac{1}{3}},1,0,0,0) \preccurlyeq \cdots$ converging to $V = (\text{\sfrac{4}{5}},\text{\sfrac{3}{5}},1,0,0,0)$. On the other hand, the over-approximation sequence is $\top = (1,1,1,1,1,1) \succcurlyeq (1,1,1,1,1,1) \succcurlyeq \cdots$ stabilizing at $\nu\bellman = \top$.
\end{myexample}

\section{The Maximality Inheritance Principle}\label{sec:maxPresPrin}

To introduce our second key ingredient, we revisit~\cite[Thm.~10.19]{DBLP:books/daglib/0020348}, a classical result well-known in the probabilistic model checking community.

\begin{mynotation}
The notations used in \cref{sec:maxPresPrin} are limited to this section (where we only consider Markov chains). For example, $\bellman$ denotes the one in \cref{eq:MCBellman} in this section, while it denotes the one in \cref{def:reachabilityBellmanOpr} (for SGs) in other sections.
\end{mynotation}

Let $\mathcal{M}=(S,\init,\delta,T)$ be a Markov chain (MC), with a finite state space $S$, an initial state $\init$, a transition kernel $\delta\colon S\to\dist(S)$, and a set $T \subseteq S$ of target states. Let $\mcbellman : [0,1]^{S} \to [0,1]^{S}$ be the Bellman operator for reachability probability, that is, for $f\in[0,1]^{S}$,
\begin{equation*}\label{eq:MCBellman}
\mcbellman(f)(s) :=
\begin{cases}
    1 & \textnormal{if } s \in T, \textnormal{ and} \\
    \sum_{s'\in S}\delta(s,s')\cdot f(s') & \text{otherwise}.
\end{cases}
\end{equation*}
It is standard that the \emph{value function} $\mcvaluefunc : S \to [0,1]$, which assigns to each $s\in S$ the reachability probability to $T$, is the lfp of $\mcbellman $ (i.e., $\mcvaluefunc=\mu \mcbellman $).

Much like our discussion in~\S\ref{subsec:fixedPtPrelim}--\ref{subsec:BVI}, it is desired that the fixed point of $\mcbellman$ is unique (i.e., $\mu\mcbellman=\nu\mcbellman$), which would give us BVI with convergence. This is not the case in general; a counterexample can be given essentially like in \cref{ex:bvi}.

Nonetheless, achieving fixed-point uniqueness for MC reachability is not difficult. Let $S_{=0} \subseteq S$ be the set of states that reach $T$ with probability $0$. We define the \emph{modified Bellman operator}\footnote[2]{In~\cite{DBLP:books/daglib/0020348}, the set $S_{=1} \subseteq S$ of states that reach $T$ with probability $1$ is also defined. However, $S_{=1}$ is not necessary for the uniqueness proof.} $\mcbellmanMod : [0,1]^S \to [0,1]^S$ where, for $f \in [0,1]^S$,
\begin{equation}\label{eq:MCBellmanMod}
\mcbellmanMod(f)(s) :=
\begin{cases}
    1 & \textnormal{if } s \in T, \\
    0 & \textnormal{if } s \in S_{=0}, \textnormal{ and} \\
    \sum_{s'\in S}\delta(s,s')\cdot f(s') & \textnormal{otherwise}.
\end{cases}
\end{equation}
We claim that $\mcbellmanMod$ has a unique fixed point. First, we state the following.

\begin{mylemma}[max.\ vs.\ average]\label{lem:maxVsAvg}
Let $S$ be a finite set, $f\colon S\to [0,1]$ be a function, and $\delta\in \dist(S)$ be a distribution over $S$. Assume that $s^\star$ is an $f$-maximizer (meaning $f(s^\star) \geq f(s)$ for each $s\in S$).
\begin{enumerate}[label=(\roman*)]
    \item $f(s^\star)\geq \sum_{s\in S}\delta(s)\cdot f(s)$ holds (``the maximum is above the average'').
    \item Assume further that we have an equality $f(s^\star) = \sum_{s\in S}\delta(s)\cdot f(s)$ (``the maximum coincides with the average''). Then, for each $s\in S$, $\delta(s)>0$ implies $f(s^\star) = f(s)$.
\end{enumerate}
\end{mylemma}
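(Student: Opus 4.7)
The plan is to treat both parts of the lemma as direct consequences of the fact that a convex combination of real numbers is sandwiched between their minimum and maximum, together with the standard observation that a sum of nonnegative terms vanishes only if each summand vanishes. No clever construction should be needed; the finiteness of $S$ is used only so that the sums are well-defined.

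For part (i), I would start from the defining property of $s^\star$, namely $f(s) \leq f(s^\star)$ for every $s \in S$, multiply through by the nonnegative weight $\delta(s)$, and sum over $s$. Using $\sum_{s\in S}\delta(s)=1$ (since $\delta\in\dist(S)$) on the right-hand side immediately yields
\[
\sum_{s\in S}\delta(s)\cdot f(s) \;\leq\; f(s^\star)\sum_{s\in S}\delta(s) \;=\; f(s^\star).
\]
This is the ``maximum above the average'' inequality.

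For part (ii), I would rewrite the hypothesized equality as
\[
0 \;=\; f(s^\star) - \sum_{s\in S}\delta(s)\cdot f(s) \;=\; \sum_{s\in S}\delta(s)\bigl(f(s^\star)-f(s)\bigr),
\]
where the second equality again uses $\sum_{s}\delta(s)=1$. Each summand $\delta(s)\bigl(f(s^\star)-f(s)\bigr)$ is nonnegative, since $\delta(s)\geq 0$ and $f(s^\star)\geq f(s)$ by the maximizer property. A finite sum of nonnegative reals being zero forces every summand to be zero, so $\delta(s)>0$ implies $f(s^\star)=f(s)$, as desired.

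I do not anticipate any genuine obstacle: both claims are a one-line computation once the convex-combination identity $\sum_s \delta(s)=1$ is used. The only point worth flagging is the subtle role this lemma will presumably play later, namely as the ``inheritance'' step of the maximality inheritance principle sketched in the introduction: part (ii) is exactly what lets maximality of the gap propagate to every successor reached with positive probability, so I would phrase the conclusion of (ii) in a way that is immediately reusable in that downstream argument.
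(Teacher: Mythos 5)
Your proposal is correct and matches the paper's proof in substance: part (i) is the identical weighted-sum argument, and part (ii) differs only in presentation, since the paper argues by contraposition (a single positively-weighted strict inequality makes the average strictly smaller) while you argue directly (a vanishing finite sum of nonnegative terms forces each term to vanish) — these are the same elementary observation. No gap; nothing further needed.
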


\noindent We provide a uniqueness proof as follows.

\begin{myproposition}[unique solution {\cite[Thm.~10.19]{DBLP:books/daglib/0020348}}]\label{prop:maxPrevMC}
Assume the setting of MC in \cref{sec:maxPresPrin}. Then, the modified Bellman operator $\mcbellmanMod$ has a unique fixed point. Moreover, $\mcvaluefunc$ is its fixed point; therefore we have $\mcvaluefunc = \mu \mcbellmanMod = \nu \mcbellmanMod$.
\end{myproposition}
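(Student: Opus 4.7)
The plan has two parts: showing that $\mcvaluefunc$ is a fixed point of $\mcbellmanMod$, and then showing uniqueness via the maximality inheritance principle sketched in the introduction.

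For the first part, I would simply verify the three cases in the definition of $\mcbellmanMod$ on the candidate $\mcvaluefunc$. Since $\mcvaluefunc=\mu\mcbellman$, we already have $\mcvaluefunc(s)=1$ for $s\in T$ and $\mcvaluefunc(s)=\sum_{s'}\delta(s,s')\cdot \mcvaluefunc(s')$ for $s\notin T$. The only extra observation needed is that for $s\in S_{=0}$, by definition $\mcvaluefunc(s)=0$, matching the second clause of $\mcbellmanMod$. Hence $\mcbellmanMod(\mcvaluefunc)=\mcvaluefunc$.

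For uniqueness, I would argue by contradiction. Suppose $\mu\mcbellmanMod\neq \nu\mcbellmanMod$ and define the \emph{gap function} $g(s):=\nu\mcbellmanMod(s)-\mu\mcbellmanMod(s)\geq 0$. Since $S$ is finite, choose a \emph{gap maximizer} $s^{\star}$, i.e., a state achieving $\max_{s\in S} g(s)>0$. Observe that $s^{\star}\notin T$ (the first clause forces $g\equiv 0$ on $T$) and $s^{\star}\notin S_{=0}$ (the second clause forces $g\equiv 0$ on $S_{=0}$). Hence $s^{\star}$ falls under the third clause of $\mcbellmanMod$, and because both $\mu\mcbellmanMod$ and $\nu\mcbellmanMod$ are fixed points, subtracting the two identities yields the equality
\begin{equation*}
g(s^{\star}) \;=\; \sum_{s'\in S}\delta(s^{\star},s')\cdot g(s').
\end{equation*}
This is precisely the ``maximum coincides with the average'' situation of \cref{lem:maxVsAvg}(ii) applied to $g$ and the distribution $\delta(s^{\star},\cdot)$. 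Therefore every $s'$ with $\delta(s^{\star},s')>0$ is itself a gap maximizer.

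Iterating this inheritance step along successors, every state reachable from $s^{\star}$ in $\mathcal{M}$ is a gap maximizer. But $s^{\star}\notin S_{=0}$ means that $T$ is reachable from $s^{\star}$ with positive probability, so some $t\in T$ is reachable and must be a gap maximizer. This contradicts $g(t)=0$ for $t\in T$. I therefore conclude $\mu\mcbellmanMod=\nu\mcbellmanMod$, so $\mcbellmanMod$ has a unique fixed point; combined with the first part this forces $\mcvaluefunc=\mu\mcbellmanMod=\nu\mcbellmanMod$. The main subtlety I expect is the ``clause selection'' at $s^{\star}$: one must carefully rule out $s^{\star}\in T$ and $s^{\star}\in S_{=0}$ before invoking \cref{lem:maxVsAvg}(ii), since these are exactly the clauses that make the operator non-affine and thus break the subtraction trick for $g$.
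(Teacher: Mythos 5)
Your proposal is correct and follows essentially the same route as the paper: the gap maximizer over a finite state space, the exclusion of $T$ and $S_{=0}$, the subtraction of the two fixed-point identities, and the appeal to \cref{lem:maxVsAvg}(ii) to propagate maximality to successors are all exactly the paper's argument. The only cosmetic differences are that you verify $\mcvaluefunc=\mcbellmanMod(\mcvaluefunc)$ by direct case inspection (the paper instead shows the Kleene sequences of $\mcbellman$ and $\mcbellmanMod$ coincide, yielding $\mcvaluefunc=\mu\mcbellmanMod$), and your closing contradiction is the contrapositive of the paper's: you use $s^{\star}\notin S_{=0}$ to reach a gap-maximizing target state with zero gap, whereas the paper concludes that $T$ is unreachable from $s^{\star}$ and hence $s^{\star}\in S_{=0}$.
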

\begin{proof}
We first show that $\mcvaluefunc = \mu\mcbellmanMod$. This follows the fact that the Kleene sequence $\bot \preccurlyeq \mcbellmanMod(\bot) \preccurlyeq (\mcbellmanMod)^2(\bot) \preccurlyeq \cdots$
for the modified Bellman operator $\mcbellmanMod$ coincides with that for $\mcbellman$. We show this by contradiction. Pick the smallest $k$ such that $(\mcbellman)^k(\bot) \neq (\mcbellmanMod)^k(\bot)$. Since the only difference is in the second case of \cref{eq:MCBellmanMod}, there must be some $s \in S_{=0}$ such that $(\mcbellman)^k(\bot)(s) > 0$. But then we have $\mcvaluefunc(s) \geq (\mcbellman)^k(\bot)(s)>0$, contradicting with $s\in S_{=0}$.

It follows that $\mcvaluefunc \preccurlyeq \nu \mcbellmanMod$. We show that this inequality is an equality. We argue by contradiction; assume $\mcvaluefunc \prec \nu\mcbellmanMod$. Then, there is $s \in S$ such that $\mcvaluefunc(s) < \nu\mcbellmanMod(s)$. Let $s^\star$ be a \emph{gap maximizer}, i.e., a state that maximizes this gap: $ (\nu\mcbellmanMod - \mcvaluefunc)(s^\star) \geq (\nu\mcbellmanMod - \mcvaluefunc)(s)$ for each $ s\in S$. Note that we can pick such $s^\star$ since $S$ is finite.

By this choice of $s^\star$, we have $(\nu\mcbellmanMod - \mcvaluefunc)(s^\star) > 0$. It follows that
\begin{equation}\label{eq:MCSStarNotZeroOrTerm}
s^\star\not\in T\quad\text{and}\quad s^\star\not\in S_{=0}.
\end{equation}
Indeed, if we assume otherwise, we have
\begin{align*}
& (\nu\mcbellmanMod - \mcvaluefunc)(s^{\star})\\
& \;=\; \bigl(\mcbellmanMod(\nu\mcbellmanMod) - \mcbellmanMod(\mcvaluefunc)\bigr)(s^{\star}) \quad\text{since $\nu \mcbellmanMod$ and $\mcvaluefunc=\mu \mcbellmanMod$ are fixed points of $\mcbellmanMod$}\\
& \;=\; \bigl(\mcbellmanMod(\nu\mcbellmanMod)\bigr)(s^{\star}) - \bigl(\mcbellmanMod(\mcvaluefunc)\bigr)(s^{\star})\\
& \;=\; 1-1 \quad\text{(if $s^{\star}\in T$)}\quad\text{or}\quad 0-0 \quad\text{(if $s^{\star}\in S_{=0}$)} \qquad\qquad\quad\!\text{by \cref{eq:MCBellmanMod}}\\
& \;=\; 0,\qquad\text{and this contradicts with $(\nu \mcbellmanMod - \mcvaluefunc)(s^{\star}) > 0$.}
\end{align*}
Now, we reason as follows.
\begin{equation*}\label{eq:maximalityProofMC}
\begin{aligned}
& (\nu \mcbellmanMod - \mcvaluefunc)(s^{\star})\quad =\; \bigl(\mcbellmanMod(\nu \mcbellmanMod) - \mcbellmanMod(\mcvaluefunc)\bigr)(s^{\star})\\
& \;=\; \bigl(\mcbellmanMod(\nu \mcbellmanMod)\bigr)(s^{\star}) - \bigl(\mcbellmanMod(\mcvaluefunc)\bigr)(s^{\star})\\
& \;=\; \textstyle \sum_{s'\in S}\delta(s^{\star},s')\cdot (\nu\mcbellmanMod)(s') - \sum_{s'\in S}\delta(s^{\star},s')\cdot (\mcvaluefunc)(s')\qquad\text{by \cref{eq:MCSStarNotZeroOrTerm,eq:MCBellmanMod}}\\
& \;=\; \textstyle \sum_{s'\in S}\delta(s^{\star},s')\cdot (\nu \mcbellmanMod- \mcvaluefunc)(s').
\end{aligned}
\end{equation*}

We have proved that the average of $\nu\mcbellmanMod-\mcvaluefunc$ on the distribution $\delta(s^\star)$ coincides with its maximum taken at $s^\star$. Therefore, by \cref{lem:maxVsAvg}, we have $(\nu \mcbellmanMod - \mcvaluefunc)(s^\star) = (\nu \mcbellmanMod - \mcvaluefunc)(s')$ for each successor $s'$ of $s^\star$, where being a successor means $\delta(s^\star,s')>0$. Hence, we have shown that being a gap maximizer is \emph{inherited} to successors.

The above reasoning is valid for any gap maximizer $s^\star$. Thus, all offsprings of $s^\star$ (i.e., those reachable from $s^\star$) are gap maximizers. However, in \cref{eq:MCBellmanMod} we showed that no gap maximizer is in $T$ or in $S_{=0}$. Therefore, $T$ is unreachable from $s^\star$. This implies $V(s^\star)=0$ and thus $s^\star\in S_{=0}$, which is a contradiction with \cref{eq:MCSStarNotZeroOrTerm}. \qed
\end{proof}

We would like to signify a characteristic argument in the proof, calling it as the \emph{maximality inheritance principle}. It has the following features.
\begin{itemize}
    \item The goal is to show that two fixed points on $[0,1]^{S}$---the lfp for one operator $\bellman_1$ (here $\bellman$) and the gfp for another operator $\bellman_2$ (here $\bellman'$)---coincide.
    \item One argues by contradiction. Assuming $S$ is finite (this is necessary---see \cref{remark:finiteness}), we pick a \emph{gap maximizer} $s^\star$.
    \item Then, one uses \cref{lem:maxVsAvg} (especially Item (ii)) to show that \emph{maximality} (i.e., being a gap maximizer) \emph{is inherited}. This proof method seems to work in many probabilistic settings beyond MCs; indeed we use it for SGs in \cref{subsec:ccAlg2}.
    \item Finally, one derives a contradiction from this maximality inheritance. This is typically done by exploiting some ``lfp structure'' of $\mu \bellman_{1}$ or $\nu\bellman_2$. In \cref{prop:maxPrevMC}, we used the former. Our proof in \cref{subsec:ccAlg2} will use the latter: while the fixed point $\nu\bellman_2$ itself is the gfp, the operator $\bellman_2$ ($\wpg$ in \cref{subsec:ccAlg2}) is defined with suitable lfps.
\end{itemize}

\section{Widest Path Games}\label{sec:WPG}

The maximality inheritance principle (\cref{sec:maxPresPrin}) will be central to the correctness proof of our 2WP-BVI algorithm (\cref{sec:WPGBVI}). Here we introduce the other key ingredient, the \emph{(2-player) widest path games}, which are used for formalizing the algorithm.

We define a non-stochastic (2-player) widest path game (WPG) from an SG $\sg$ and a function $u \in [0,1]^S$. The game has a structure that mirrors that of $\sg$: it has the same sets of states for Maximizer and Minimizer; these two players compete against each other under what we call the \emph{widest path objective}.

\subsection{Widest Path Objective}\label{subsec:WPObj}

Given an SG $\sg$ as in \cref{def:SG}, and a function $u \in [0,1]^S$, we associate each state-action pair $(s,a) \in S \times A$ with a \emph{width} $\phi_u(s,a)$, where $\phi_{u}$ is the state-action expectation of $u$ (\cref{def:stateActionExp}). Given an infinite path $\rho$, its \emph{path width} with respect to $u$ is the bottleneck width of the path. It is defined recursively as follows.

\begin{mydefinition}[path width $w_u(\rho)$, widest path width $ \mathcal{W}^{(\MaxStrategy,\MinStrategy)}_{u,s} $]\label{def:pathWidth}
Let $\sg$ be an SG as in \cref{def:SG}, $\rho = s_0 a_0 s_1 a_1 \ldots \in (S \times A)^\omega$ be an infinite path (cf.\ \cref{subsec:SG}), and $u \in [0,1]^S$. The \emph{path width} $w_u(\rho)$ of $\rho$ with respect to $u$ is defined by
$$
w_u(\rho) :=
\begin{cases}
    1 & \textnormal{if } s_0 \in T, \\
    0 & \textnormal{if } s_i \not\in T \text{ for all } i \in \mathbb{N}, \text{ and} \\
    \displaystyle\min\bigl(\,\phi_u(s_0,a_0),\,w_u(s_1 a_1 s_2 a_2 \ldots)\,\bigr) & \textnormal{otherwise}.
\end{cases}
$$
Note that only paths that reach $T$ can have non-zero path width.

We extend the above to finite paths $\rho'=s_0 a_0 s_1 a_1 \ldots s_k \in ((S \setminus T) \times A)^* \times T$ with $s_k\in T$. In explicit terms, we have $w_u(\rho') = \min_{0 \leq i < k} \phi_u(s_i,a_i)$. In fact, we can easily see that $w_{u}(\rho')=w_{u}(\rho'\rho'')$ for any infinite path $\rho''$ from $s_k$.

Given a pair of strategies $(\MaxStrategy,\MinStrategy)$ and a state $s \in S$, we define $\textsc{Path}^{(\MaxStrategy,\MinStrategy)}_s := \{ s_0 a_0 s_1 a_1 \ldots \}$ to be the set of all infinite paths such that $s_0 = s$, $a_i = \MaxStrategy(s_i)$ if $s_i \in \MaxState$, and $a_i = \MinStrategy(s_i)$ if $s_i \in \MinState$, for each $i \in \mathbb{N}$. Now we define
\begin{equation}\label{eq:widestPath}
\mathcal{W}^{(\MaxStrategy,\MinStrategy)}_{u,s} \;:=\; \max\{w_u(\rho) : \rho \in \textsc{Path}^{(\MaxStrategy,\MinStrategy)}_s\};
\end{equation}
this is the \emph{widest path width} from $s$ following $(\MaxStrategy,\MinStrategy)$ with respect to $u$.
\end{mydefinition}

The maximum in \cref{eq:widestPath} is well-defined: although $\textsc{Path}^{(\MaxStrategy,\MinStrategy)}_s$ can be an infinite set, the maximum is asking for the widest path width from $s$ to $T$ in a suitably defined \emph{finite} graph. It is well-known that this widest path width is well-defined, and efficient algorithms such as Dijkstra's algorithm~\cite{DBLP:journals/nm/Dijkstra59} exploit this fact.

Now we define the \emph{widest path objective}. Given a function $u \in [0,1]^S$, we are interested in the widest path width $\mathcal{W}^{(\MaxStrategy,\MinStrategy)}_{u,s}$ when both players play optimally under the widest path objective. This is given by the \emph{widest path value function} defined below.

\begin{mydefinition}[widest path value function $W_u$]\label{def:widestPathValueFunc}
Assume the setting of \linebreak \cref{def:pathWidth}. The \emph{widest path value function} with respect to $u \in [0,1]^S$ is the function $W_u \colon S\to[0,1]$ defined by $W_u(s) := \displaystyle\min_\MinStrategy \max_\MaxStrategy \mathcal{W}^{(\MaxStrategy,\MinStrategy)}_{u,s}$.
\end{mydefinition}

Note that a pair of strategies $(\MaxStrategy,\MinStrategy)$ does not determine a successor state $s_{i+1} \in \textsc{Post}(s_i,a_i)$. Nevertheless, the widest path width $\mathcal{W}^{(\MaxStrategy,\MinStrategy)}_{u,s}$ in \cref{eq:widestPath} is defined as the maximum path width over all possible paths. Consequently, the widest path game can be interpreted as a non-stochastic game in which, at each state $s$, either Maximizer or Minimizer chooses an action $a \in \textsc{Av}(s)$, and Maximizer subsequently chooses a successor state from $\textsc{Post}(s,a)$.

Informally, by having Maximizer choose successor states in the widest path game, if $u$ is an over-approximation of the optimal reachability probability, then so is $W_u$. This observation forms the main idea behind our 2WP-BVI algorithm presented in \cref{sec:WPGBVI}.

\begin{myexample}\label{ex:wpg_def}
Consider the non-stochastic (2-player) widest path game constructed from SG in \cref{fig:ex_sg} (left) and $u : s_i \mapsto \text{\sfrac{$i$}{5}}$ for $0 \leq i \leq 5$. The width $\phi_u(s,a)$ of each state-action pair is shown in \cref{fig:cwpvf} (right). Its widest path value function is $W_u = (\text{\sfrac{1}{5}},\text{\sfrac{3}{20}},1,0,0,0)$ under a pair of optimal strategies $(\MaxStrategy,\MinStrategy)$ where $\MaxStrategy(\cdot) = \alpha$ and $\MinStrategy(\cdot) = \beta$. Notice that the widest paths require the successors of $s_0$ and $s_1$ under the mentioned pair of optimal strategies to be $s_2$ and $s_0$, respectively. These choices can be interpreted as being made by Maximizer.
\end{myexample}

\subsection{Widest Path Value Function as a Fixed Point}\label{subsec:WPVFasFixP}

The widest path value function can be computed by backward iteration from the target set using a suitable ``Bellman operator''. This is the intuition of Bellman--Ford's algorithm~\cite{bellman1958routing,ford1956network}. We formalize this with lattice-theoretic fixed points.

\begin{mydefinition}[widest path Bellman operator $\wpbellman_{u}$]\label{def:wpbellman}
Assume the setting \linebreak of \cref{def:pathWidth}. The \emph{widest path Bellman operator} with respect to $u \in [0,1]^S$ is the function $\wpbellman_u : [0,1]^S \to [0,1]^S$ defined as follows. For $f \in [0,1]^S$ and $s\in S$,
$$
(\wpbellman_u f)(s) :=
\begin{cases}
    1 & \textnormal{if } s \in T, \\
    \displaystyle\max_{a \in \textup{\textsc{Av}}(s)} \min\left(\phi_u(s,a), \max_{s' \in \textup{\textsc{Post}}(s,a)} f(s')\right) & \textnormal{if } s \in \MaxState\setminus T, \text{ and}\\
    \displaystyle\min_{a \in \textup{\textsc{Av}}(s)} \min\left(\phi_u(s,a), \max_{s' \in \textup{\textsc{Post}}(s,a)} f(s')\right) & \textnormal{if } s \in \MinState\setminus T.
\end{cases}
$$
\end{mydefinition}

\noindent Notice that $\wpbellman_u$ incorporates $\phi_u(s,a)$ (\cref{def:stateActionExp}) in a way similar to the reachability Bellman operator $\bellman$ (\cref{def:reachabilityBellmanOpr}). Therefore, roughly speaking, an application of $\wpbellman_u$ subsumes an application of $\bellman$. This intuition will be useful later.

\begin{mylemma}\label{lem:wpbellmanMonotone}\label{lem:wpvf}
(i) $\wpbellman_u$ is monotone, and (ii) $\mu\wpbellman_u = W_u$ (\cref{def:widestPathValueFunc}).
\end{mylemma}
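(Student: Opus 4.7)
The plan is to verify (i) by routine monotonicity bookkeeping and to establish (ii) by first exhibiting $W_u$ as a fixed point of $\wpbellman_u$ and then matching the Kleene iterates $\wpbellman_u^k\bot$ from below.

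For (i), I would unfold each of the three branches of \cref{def:wpbellman}. The $T$-branch returns the constant $1$; in the two other branches the only occurrence of $f$ is inside $\max_{s'\in\textsc{Post}(s,a)} f(s')$, which is monotone in $f$, while the surrounding $\min(\phi_u(s,a),\,\cdot\,)$ and the outer $\max_a$ (resp.\ $\min_a$) all preserve the pointwise order on $[0,1]$. Composition of monotone maps gives monotonicity of $\wpbellman_u$.

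For (ii), the first step is to show that $W_u$ is a fixed point of $\wpbellman_u$. For $s\in T$ this is immediate since every path starting at $s$ has width $1$. For $s\notin T$, I would exploit the observation already highlighted in \cref{subsec:WPObj}: because $\mathcal{W}^{(\MaxStrategy,\MinStrategy)}_{u,s}$ is a \emph{max} over all paths compatible with the strategies, the successor on each transition is effectively chosen by Maximizer. Thus, if the acting player picks action $a$ at $s$, the first edge contributes width $\phi_u(s,a)$ and the remainder contributes at most $\max_{s'\in\textsc{Post}(s,a)} W_u(s')$; the two are combined by $\min$ because path width is a bottleneck. Maximizing (for $s\in\MaxState$) or minimizing (for $s\in\MinState$) over $a$ reproduces the two nontrivial branches of \cref{def:wpbellman}. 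A small formality here is that memoryless strategies suffice for both players in the widest path game, which holds because a cycle in the induced graph can never strictly increase the bottleneck, so an optimal widest path may be chosen simple.

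From this fixed-point property, $\mu\wpbellman_u \preccurlyeq W_u$ is automatic. For the reverse inequality, I would invoke Kleene's theorem (\cref{thm:kleene}): $\wpbellman_u$ is $\omega$-continuous since it is composed of finite $\min/\max$ and of the constants $\phi_u(s,a)$ that do not depend on the argument, so $\mu\wpbellman_u = \sup_k \wpbellman_u^k\bot$. An easy induction on $k$ then shows that $(\wpbellman_u^k\bot)(s)$ is exactly the widest path value restricted to paths reaching $T$ in at most $k$ steps, with non-reaching paths contributing $0$. Since every path of positive width must reach $T$ in finitely many steps, $\sup_k \wpbellman_u^k\bot = W_u$ pointwise, giving $W_u \preccurlyeq \mu\wpbellman_u$.

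The main obstacle is the fixed-point identity itself: turning the global ``$\max$ over all compatible infinite paths'' in \cref{def:pathWidth} into the local ``first-step action, then $\max$ over first-step successors, then widest path value at the successor'' that appears in $\wpbellman_u$. This requires a clean separation between the first edge of a path and its suffix, together with the fact that Maximizer's successor-selection can be realized by (memoryless) strategies. These steps are morally standard for widest-path / shortest-path games, but they deserve care because strategies in the underlying stochastic game do not directly encode the successor choice.
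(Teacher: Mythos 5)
Your proposal is correct and takes essentially the same route as the paper: the decisive step in both is the characterization of the Kleene iterates $(\wpbellman_u)^{k+1}\bot$ as widest path values over paths of length at most $k$, combined with the observation that a positive-width path to $T$ can be taken simple (length at most $|S|-1$). The only difference is packaging --- the paper concludes stabilization at step $|S|$ and invokes \cref{thm:cousot} (so it needs only monotonicity, not $\omega$-continuity), whereas you additionally verify that $W_u$ is a fixed point to obtain $\mu\wpbellman_u\preccurlyeq W_u$; that extra step is sound but becomes redundant once the iterate characterization is established.
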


\cref{lem:wpvf} suggests that one can compute $W_u = \mu\wpbellman_u$ using value iteration, that is, by computing $\bot\preccurlyeq \wpbellman_u \bot \preccurlyeq \cdots$ (\cref{subsec:fixedPtPrelim}). This is analogous to how Bellman--Ford's algorithm solves the shortest path problem in a weighted graph~\cite{bellman1958routing,ford1956network}. However, for shortest paths with non-negative weights, it is well-known that Bellman--Ford's algorithm can be improved to Dijkstra's algorithm~\cite{DBLP:journals/nm/Dijkstra59}.

We present a similar improvement for our current problem over widest path games. Note that our setting differs from classic Dijkstra's in two aspects: (i) the choice of weight semirings (i.e., shortest path vs. widest path, though this difference is not essential), and (ii) the presence of both Maximizer and Minimizer in our setting. The latter difference is nontrivial, and we address it by discarding Minimizer's suboptimal choices (\cref{algo_wpg:line9}).

\begin{mydefinition}[$\textsc{\textrm{DijkstraWidestPathGame}}(\sg, u)$]
Our Dijkstra-type algorithm for computing the widest path value function $W_{u}$ (\cref{def:widestPathValueFunc}) is in \cref{algo_wpg}.
\end{mydefinition}

\begin{algorithm}[t]
\caption{our Dijkstra-type algorithm for computing $W_{u}$ (\cref{def:widestPathValueFunc}).}\label{algo_wpg}
\DontPrintSemicolon
\SetAlgoNoLine
$\textsc{DijkstraWidestPathGame}(\sg, u)$\\ \Indp
    $\mathbb{W}(s) \gets 1$ for all $s \in T$; $\mathbb{W}(s) \gets 0$ for all $s \in S \setminus T$ \\
    \While{\textbf{\textup{true}}}{
        Find a state-action pair $(s,a) \in S \times A$ such that (i) $\mathbb{W}(s) = 0$, \linebreak (ii) $a \in \textsc{Av}(s)$, and (iii) $\min(\phi_u(s,a),\max_{s' \in \textsc{Post}(s,a)} \mathbb{W}(s'))$ is \linebreak the largest among those pairs which satisfy (i)--(ii) \label{algo_wpg:line5} \\
        \lIf{\textup{no such pair exist}}{\textbf{break}}
        $x \gets \min(\phi_u(s,a),\max_{s' \in \textsc{Post}(s,a)} \mathbb{W}(s'))$ \\
        \lIf{$x = 0$}{\textbf{break}}\label{algo_wpg:line8}
        \tikzmk{A}\label{algo_wpg:line9}\lIf{$s \in \MinState$ \textbf{\textup{and}} $|\textup{\textsc{Av}}(s)| > 1$}{$\textsc{Av}(s) \gets \textsc{Av}(s) \setminus \{a\}$}\tikzmk{B}\boxit{gray}
        \lElse{$\mathbb{W}(s) \gets x$}\label{algo_wpg:line10}
    }
    \Return $\mathbb{W}$
\end{algorithm}

\begin{myexample}\label{ex:algo1run}
Consider an execution of \cref{algo_wpg} with WPG constructed from SG in \cref{fig:ex_sg} (left) and $u : s_i \mapsto \text{\sfrac{$i$}{5}}$ for $0 \leq i \leq 5$. The width $\phi_u(s,a)$ of each state-action pair is shown in \cref{fig:cwpvf} (right). The algorithm begins with $\mathbb{W}(s_2) \gets 1$ and $\mathbb{W}(s_i) \gets 0$ for $i \neq 2$. In the while loop, the first pair selected is $(s_5,\alpha)$ with $x = \text{\sfrac{3}{5}}$. As $s_5 \in \MinState$ and $|\textsc{Av}(s_5)| > 1$, the action $\alpha$ is considered suboptimal and is removed from $s_5$. The next pairs are $(s_1,\gamma)$ and $(s_1,\alpha)$. Both are removed for the same reason. Then, $(s_0,\alpha)$ with $x = \text{\sfrac{1}{5}}$ is selected. Since $s_0 \in \MaxState$, the algorithm assigns $\mathbb{W}(s_0) \gets \text{\sfrac{1}{5}}$. The loop continues with $(s_1,\beta)$ selected. As now $\textsc{Av}(s_1) = \{\beta\}$, the algorithm performs $\mathbb{W}(s_1) \gets \text{\sfrac{3}{20}}$. The next pair is $(s_3,\alpha)$ with $x = 0$. Therefore, the loop ends and the algorithm returns $\mathbb{W} = (\text{\sfrac{1}{5}},\text{\sfrac{3}{20}},1,0,0,0)$.
\end{myexample}

\begin{mytheorem}\label{thm:computeWidestPathValueFunctionCorrectness}
$\textup{\textsc{DijkstraWidestPathGame}}(\sg, u)$ always terminates and returns $W_u$ for all inputs $(\sg,u)$.
\end{mytheorem}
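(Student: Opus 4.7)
The plan is to establish termination and correctness separately.

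For \emph{termination}, observe that each iteration of the while loop performs exactly one of three mutually exclusive actions: it breaks; it removes one action from a Minimizer's state at \cref{algo_wpg:line9}; or it assigns a positive value to $\mathbb{W}(s)$ at \cref{algo_wpg:line10}. The second strictly decreases $\sum_{s \in \MinState} |\textsc{Av}(s)|$ by one, and the third strictly decreases the number of states $s$ with $\mathbb{W}(s) = 0$ by one. Since both quantities are non-negative integers, the loop terminates after at most $\sum_{s \in \MinState} |\textsc{Av}(s)| + |S|$ iterations.

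For \emph{correctness}, I would maintain the following three invariants throughout execution, where $W_u$ always denotes the widest path value function under the \emph{original} availability: (I1) every $s$ with $\mathbb{W}(s) > 0$ satisfies $\mathbb{W}(s) = W_u(s)$; (I2) the widest path value function under the current, possibly restricted, availability still equals $W_u$; and (I3) the sequence of selected values $x_1 \geq x_2 \geq \cdots$ is non-increasing. The linchpin is a Dijkstra-style \emph{key lemma}: at any iteration at which pair $(s,a)$ is selected with value $x$, every state $s''$ with $\mathbb{W}(s'') = 0$ satisfies $W_u(s'') \leq x$. I would prove this by induction on $n$, using $W_u = \sup_n \wpbellman_u^n(\bot)$ from \cref{lem:wpvf}: if $\wpbellman_u^{n+1}(\bot)(s'') > x$ for some unset $s''$, then unfolding $\wpbellman_u$ yields a successor $s'''$ that is either already set with $\mathbb{W}(s''') > x$, producing a valid pair at $s''$ whose selection value exceeds $x$ and contradicting the maximality in \cref{algo_wpg:line5}, or still unset with $\wpbellman_u^n(\bot)(s''') > x$, contradicting the induction hypothesis. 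With the key lemma in hand, each branch of the loop body preserves the invariants: for $s \in \MaxState$, the chosen $a$ realizes $x$ as an achievable value (so $x \leq W_u(s)$) while the lemma precludes any better action (so $W_u(s) \leq x$); for $s \in \MinState$ with $|\textsc{Av}(s)| = 1$, the unique remaining action forces the same conclusion; for $s \in \MinState$ with $|\textsc{Av}(s)| > 1$, one can show that every action at $s$ yields true value at most $x$, so removing the maximal-value action $a$ does not uniquely eliminate Minimizer's minimum and hence preserves (I2). Upon termination---either when no valid pair exists (all states are set; apply (I1)), or when the best $x$ equals $0$ (apply the key lemma to conclude $W_u = 0$ on the remaining unset states)---the returned $\mathbb{W}$ coincides with $W_u$.

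\emph{The main obstacle} is invariant (I2): demonstrating that removing a Minimizer's action indeed leaves $W_u$ unchanged, even though the decision is made from the partial approximation $\mathbb{W}$ rather than from $W_u$ itself. This is bridged by combining the key lemma with (I1) and (I3): set states carry their true $W_u$ values, unset states have $W_u \leq x$ at the current iteration, and previously set values dominate the current $x$. Together these facts let us translate the algorithm's local computation $\min(\phi_u(s,a), \max_{s'} \mathbb{W}(s'))$ into the true $\min(\phi_u(s,a), \max_{s'} W_u(s'))$, certifying both that the removed action is Minimizer-dominated and that Maximizer's best successor choice is accurately reflected. The 2-player twist---absent in classical Dijkstra---is precisely what requires this translation, since Minimizer's minimum over actions must be shown insensitive to the removal of a suboptimal option.
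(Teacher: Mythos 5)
Your proposal is correct and follows essentially the same Dijkstra-style route as the paper: a greedy invariant that every state assigned a positive value carries its true widest-path value, a ``frontier'' argument showing the selected value $x$ dominates the widest-path value of every still-unset state, and the observation that the removed Minimizer action is (weakly) dominated so its removal preserves the game value. Two points of execution differ, both mildly in your favor. First, you prove the key lemma by induction on the Kleene iterates $\wpbellman_u^n\bot$ (via \cref{lem:wpvf}), whereas the paper argues directly on a witnessing finite path $\rho=s_0a_0\ldots s_k$ and locates the crossing index $i$ with $\mathbb{W}(s_i)=0$ and $\mathbb{W}(s_{i+1})\neq 0$; these are interchangeable, though your statement (a bound on \emph{all} unset states, not just the selected one) is the form actually needed to justify the Minimizer case. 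Second, you isolate invariant (I2) --- that pruning $\textsc{Av}(s)$ at a Minimizer state leaves $W_u$ unchanged because every action at $s$ has true value at most $x$ while the removed one attains exactly $x$, so the minimum over the surviving actions is unaffected --- where the paper compresses this into the single sentence ``doing so eliminates a suboptimal action for Minimizer.'' Making (I2) explicit is worthwhile, since all subsequent selections are evaluated in the modified game and the claim that its value function still equals the original $W_u$ is what licenses reading $\mathbb{W}$ as a stand-in for $W_u$ at set successors. No gap; your sketch of (I2) via the key lemma together with (I1) and the monotonicity of the selected values is the right way to close it.
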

\begin{proof}
\cref{algo_wpg} always terminates since each pair $(s,a)$ is considered at most once, and there are only finitely many such pairs. For its correctness, we show that, for each $s \in S$, when a positive value is assigned to $\mathbb{W}(s)$, that value is $W_u(s)$. The algorithm begins with $\mathbb{W}(s) \gets 1$ for each $s \in T$, which is correct as $W_u(s) = 1$. We then argue that \cref{algo_wpg:line10} assigns the correct value for each $s \in S \setminus T$.

First, we show that when a pair $(s,a)$ is selected in \cref{algo_wpg:line5}, the value $x = \min(\phi_u(s,a),\max_{s' \in \textsc{Post}(s,a)} \mathbb{W}(s'))$ is the maximum widest path width that can be achieved at $s$. For contradiction, suppose $x$ is not the maximum. Then, there must be a finite path $\rho = s_0 a_0 s_1 a_1 \ldots s_k$ where $s_0 = s$, $s_k \in T$, and $w_u(\rho) > x$. Along this path, there must be some $s_i$ such that $\mathbb{W}(s_i) = 0$ and $\mathbb{W}(s_{i+1}) \neq 0$. As $\mathbb{W}(s_{i+1}) \neq 0$, its value must have been assigned and, by our invariant, equals $W_{u}(s_{i+1})$. Since $w_u(\rho) > x$, it follows that $\phi_u(s_i,a_i) > x$ and $\mathbb{W}(s_{i+1}) > x$. Hence, $\min(\phi_u(s_i,a_i),\max_{s' \in \textsc{Post}(s_i,a_i)} \mathbb{W}(s')) > x$, contradicting the assumption that the selected pair $(s,a)$ in \cref{algo_wpg:line5} has the largest such value.

Now, we consider two cases. If $s\in \MaxState$, we set $\mathbb{W}(s) \gets x$ in \cref{algo_wpg:line10}, and by the above argument, this equals $W_u(s)$. If $s\in \MinState$, we remove $a$ from $\textsc{Av}(s)$ (unless $a$ is the last one), which is fine since doing so eliminates a suboptimal action for Minimizer. If the loop terminates with $x = 0$ at \cref{algo_wpg:line8}, then the maximum widest path width is zero for all remaining states. Thus, $\mathbb{W}(s) = 0 = W_u(s)$. \qed
\end{proof}

\section{Our 2WP-BVI Algorithm}\label{sec:WPGBVI}

We introduce a new operator built upon the widest path value function (\cref{def:widestPathValueFunc}).

\begin{mydefinition}[widest path operator $\wpg$]\label{def:WPOpr}
Let $\sg$ be an SG as in \cref{def:SG}. Its \emph{widest path operator} is $\wpg : [0,1]^S \to [0,1]^S$ where $\wpg u := W_u$ for $u \in [0,1]^S$.
\end{mydefinition}

A similar name appears for 1WP-BVI in~\cite{DBLP:conf/cav/PhalakarnTHH20}, but it refers to a different construction. The widest path operator $\wpg$ for our 2WP-BVI algorithm solves a \emph{(2-player)} widest path game w.r.t. a function $u$. As discussed earlier, the structure of (2-player) widest path games closely mirrors that of SGs. In contrast, the operator for 1WP-BVI performs \emph{(1-player, Maximizer-only)} widest path computation on a certain weighted directed graph. See \cref{subsec:compare} for a formal description of 1WP-BVI and a detailed comparison between the two approaches.

\begin{mydefinition}[$\textsc{2WP-BVI}(\sg, \varepsilon)$]\label{def:ourBVI}
Our 2WP-BVI algorithm for SG optimal reachability probabilities is in \cref{algo_iiwpg}.
\end{mydefinition}

\begin{algorithm}[t]
\caption{our 2WP-BVI algorithm, using widest path games for over-approximation. The input $\varepsilon > 0$ is a desired precision.}\label{algo_iiwpg}
\DontPrintSemicolon
\SetAlgoNoLine
$\textsc{2WP-BVI}(\sg, \varepsilon)$\\ \Indp
    $\ell(s) \gets 0$ for all $s \in S$; $u(s) \gets 1$ for all $s \in S$ \\
    \While{$u(\init) - \ell(\init) > 2\varepsilon$}{
        $\ell \gets \bellman \ell$\tcp*[f]{\cref{def:reachabilityBellmanOpr}}\\
        \tikzmk{A}$u \gets \wpg u$\tikzmk{B}\boxitt{gray}\tcp*[f]{\cref{def:WPOpr}}}
    \Return $\frac{1}{2}(\ell(\init) + u(\init))$
\end{algorithm}

\begin{myexample}\label{ex:algo2run}
We apply 2WP-BVI to \cref{fig:ex_sg} (left). The under-approximation sequence $\ell$ is the same as in \cref{ex:bvi}. The over-approximation sequence $u$, however, differs from \cref{ex:bvi} as we apply $\wpg$ instead of $\bellman$. Explicitly, our $u$ is $\top = (1,1,1,1,1,1) \succcurlyeq (1,1,1,0,0,0) \succcurlyeq (1,\text{\sfrac{3}{4}},1,0,0,0) \succcurlyeq (\text{\sfrac{11}{12}},\text{\sfrac{3}{4}},1,0,0,0) \succcurlyeq (\text{\sfrac{8}{9}},\text{\sfrac{11}{16}},1,0,0,0) \succcurlyeq \cdots$ converging to $\nu\wpg = V = (\text{\sfrac{4}{5}},\text{\sfrac{3}{5}},1,0,0,0)$ (by \cref{thm:key}).
\end{myexample}

\subsection{Correctness and Convergence}\label{subsec:ccAlg2}

For our 2WP-BVI algorithm (\cref{algo_iiwpg}), we prove its correctness (that $\ell$ and $u$ are under- and over-approximations, respectively) and convergence (of $\ell$ and $u$ to $V$). The latter also implies termination.

\begin{mylemma}\label{lem:VisFixWPOpr}
(i) $\wpg$ is monotone and $\omega^{\op}$-cocontinuous (see \cref{thm:kleene}), and (ii) $V = \wpg V = W_V$.
\end{mylemma}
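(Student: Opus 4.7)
The plan is to handle the three claims separately: monotonicity, $\omega^{\op}$-cocontinuity, and the identity $V = \wpg V = W_V$. Monotonicity is immediate: if $u \preccurlyeq u'$, then $\phi_u(s,a) \leq \phi_{u'}(s,a)$ pointwise, so $\wpbellman_u \preccurlyeq \wpbellman_{u'}$ as operators, and taking least fixed points yields $W_u \preccurlyeq W_{u'}$. For $\omega^{\op}$-cocontinuity, I would observe that $W_u(s)$ is a finite nested min/max over memoryless strategy pairs and simple paths of the quantities $\phi_u(s_i,a_i)$, each linear in $u$; hence $\wpg$ is continuous on $[0,1]^{S}$ and, in particular, preserves infima of decreasing $\omega$-chains.

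For $V = W_V$ I would prove the two inequalities separately. The direction $W_V \preccurlyeq V$ follows once we check that $V$ is itself a fixed point of $\wpbellman_V$: at every $s \notin T$ and every $a \in \textsc{Av}(s)$, the quantity $\phi_V(s,a)$ is bounded above by $\max_{s' \in \textsc{Post}(s,a)} V(s')$ by \cref{lem:maxVsAvg}(i), so $\min(\phi_V(s,a), \max_{s'} V(s')) = \phi_V(s,a)$ and $\wpbellman_V V$ collapses to $\bellman V = V$. Hence $W_V = \mu\wpbellman_V \preccurlyeq V$.

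The reverse direction $V \preccurlyeq W_V$ is the main obstacle, and I would prove it by a semantic argument. Working directly from $W_V(s) = \min_{\MinStrategy}\max_{\MaxStrategy}\mathcal{W}^{(\MaxStrategy,\MinStrategy)}_{V,s}$, fix any Minimizer strategy $\MinStrategy$ and let $\MaxStrategy^\star$ be an SG-optimal Maximizer strategy, so the induced Markov chain $\sg^{(\MaxStrategy^\star,\MinStrategy)}$ reaches $T$ from $s$ with probability at least $v := V(s)$. Setting $A_v := \{s' : V(s') \geq v\}$, it suffices to exhibit a finite path $s = s_0 \to \cdots \to s_k \in T$ in this Markov chain whose states all lie in $A_v$: along such a path $\phi_V(s_i,a_i) \geq V(s_i) \geq v$ (with equality at Maximizer states because $\MaxStrategy^\star$ is optimal, and $\geq$ at Minimizer states because $\MinStrategy$ may be suboptimal), so $w_V(\rho) \geq v$ as required.

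The existence of this path is exactly where the maximality inheritance reasoning of \cref{sec:maxPresPrin} enters. Every $s' \in A_v$ has a successor in $A_v$ in the induced chain, since the conditional expectation of $V$ at the next step is at least $V(s') \geq v$. Let $B_v \subseteq A_v$ consist of those states from which $T$ is reachable along edges lying entirely in $A_v$, and assume for contradiction that $s \notin B_v$. Pick $s^\star \in A_v \setminus B_v$ with maximum $V$-value $v^\star \geq v$. Since any successor outside $A_v$ has $V$-value strictly less than $v \leq v^\star$, the average-vs-maximum inequality forces every positive-probability successor of $s^\star$ to lie in $A_v \setminus B_v$; then \cref{lem:maxVsAvg}(ii) forces each such successor to have $V$-value exactly $v^\star$. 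Iterating shows that every positive-probability trajectory from $s^\star$ remains in $A_v \setminus B_v$ and so never reaches $T$, contradicting $V(s^\star) = v^\star > 0$. Hence $s \in B_v$, the required path exists, and $V \preccurlyeq W_V$ follows.
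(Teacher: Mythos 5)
Your proposal is correct, but it reaches $V = W_V$ by a genuinely different route than the paper. For $W_V \preccurlyeq V$, you verify that $V$ is a fixed point of $\wpbellman_V$ (via \cref{lem:maxVsAvg}(i) the $\min$ collapses to $\phi_V(s,a)$, so $\wpbellman_V V = \bellman V = V$) and invoke leastness of $\mu\wpbellman_V$; the paper instead argues by contradiction on an optimal widest path, though it does use your ``collapse'' observation later in the proof of \cref{thm:key}, so your version arguably unifies the two. For $V \preccurlyeq W_V$, the paper gives a short syntactic induction showing $\bellman^i\bot \preccurlyeq (\wpbellman_V)^i\bot$ for all $i$, which needs no strategies at all; you instead give a semantic argument that, for every $\MinStrategy$ and an optimal $\MaxStrategy^\star$, exhibits a path to $T$ staying inside $\{s' : V(s') \geq V(s)\}$, using a maximality-inheritance argument on the induced Markov chain. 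Your route is longer but more informative (it produces an explicit wide path and foreshadows the mechanism of \cref{thm:key}); the paper's induction is more economical and postpones all maximality-inheritance reasoning to the key theorem. Two small points you should make explicit to be airtight: (a) the equality $\phi_V(s,\MaxStrategy^\star(s)) = V(s)$ at Maximizer states is not automatic from ``$\MaxStrategy^\star$ is optimal'' without a one-line argument (play $\MaxStrategy^\star$ against an optimal $\MinStrategy^\star$ and expand the induced Markov chain's Bellman equation to see that a globally optimal memoryless strategy must pick $\arg\max_a \phi_V(s,a)$); and (b) the contradiction at the end needs $v = V(s) > 0$, so you should dispose of the case $V(s) = 0$ as trivial before starting. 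Your cocontinuity argument---$W_u(s)$ is a finite lattice polynomial in the linear quantities $\phi_u(s_i,a_i)$ over memoryless strategies and simple paths, hence continuous in $u$---is also a valid alternative to the paper's explicit $\varepsilon$--$N$ argument, provided you note that loop removal lets you restrict the maximum in \cref{eq:widestPath} to finitely many simple paths.
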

\begin{proof}[of (ii)]
We first show $W_V \preccurlyeq V$ by contradiction. Assume there is $s \in S$ such that $W_V(s) > V(s) \geq 0$. It is not possible that $s \in T$ as $W_V(s) = V(s) = 1$, thus $s \in S \setminus T$. By $W_V(s) > 0$, there is a finite path $\rho' = s_0 a_0 s_1 a_1 \dotsc s_k$ where $s_0 = s$, $s_k \in T$, and $W_V(s) = w_V(\rho')$. By \cref{def:pathWidth}, $w_V(\rho') = \min_{0 \leq i < k} \phi_V(s_i,a_i)$ and we have $w_V(\rho') \leq \phi_V(s,a_0)$ in particular. In case of $s \in \MaxState \setminus T$, this leads to a contradiction $\phi_V(s,a_0) \geq W_V(s) > V(s) = \max_{a \in \textsc{Av}(s)} \phi_V(s,a)$, where the last equality is from \cref{def:reachabilityBellmanOpr}. In case of $s \in \MinState \setminus T$, we know that $\rho'$ gives the minimum widest path width at $s$. However, $\phi_V(s,a_0) \geq W_V(s) > V(s) = \min_{a \in \textsc{Av}(s)} \phi_V(s,a)$. Therefore, taking an action $\arg\min_{a \in \textsc{Av}(s)} \phi_V(s,a)$ would lead to a path with a smaller widest path width (its bottleneck is at most $\min_{a \in \textsc{Av}(s)} \phi_V(s,a)$), contradicting the optimality of $\rho'$.

We prove $V = \mu\bellman \preccurlyeq \mu\wpbellman_V = W_V$ (cf.\ \cref{lem:wpvf}) via induction that $\bellman^i \bot \preccurlyeq (\wpbellman_V)^i \bot$ for $i \in \mathbb{N}$. It is trivial for $i = 0$. For the induction $i+1$, we show $\bellman(\bellman^i \bot) \preccurlyeq \wpbellman_V((\wpbellman_V)^i\bot)$ by cases. It holds when $s \in T$. When $s \in S \setminus T$, we have $\bellman^i \bot \preccurlyeq V$, implying $\phi_{\bellman^i \bot}(s,a) \leq \phi_V(s,a)$ for $a \in \textsc{Av}(s)$. Also, $\phi_{\bellman^i \bot}(s,a) \leq \max_{s' \in \textsc{Post}(s,a)} (\bellman^i \bot)(s') \leq \max_{s' \in \textsc{Post}(s,a)} ((\wpbellman_V)^i \bot)(s')$ by our hypothesis. So, $\phi_{\bellman^i \bot}(s,a) \leq \min(\phi_V(s,a), \max_{s' \in \textsc{Post}(s,a)} ((\wpbellman_V)^i \bot)(s'))$. As LHS is $(\bellman(\bellman^i))\bot$ and RHS is $(\wpbellman_V(\wpbellman_V)^i)\bot$ (\cref{def:reachabilityBellmanOpr,def:wpbellman}), the proof concludes. \qed
\end{proof}

The following is our key theorem regarding a unique fixed point of the widest path operator $\wpg$. It exploits the maximality inheritance principle (\cref{sec:maxPresPrin}); observe its similarity to the proof of \cref{prop:maxPrevMC}, especially towards its end.
\begin{mytheorem}[key theorem: unique fixed point of $\wpg$]\label{thm:key}
In the setting of\linebreak\cref{def:WPOpr}, we have $V = \mu\wpg = \nu\wpg$. In particular, $\wpg$ has a unique fixed point.
\end{mytheorem}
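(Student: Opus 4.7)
By \cref{lem:VisFixWPOpr}(ii), $V$ is a fixed point of $\wpg$, giving $\mu\wpg\preccurlyeq V\preccurlyeq\nu\wpg$; the plan therefore reduces to establishing $\nu\wpg\preccurlyeq V$ by adapting the maximality inheritance argument of \cref{prop:maxPrevMC}. I would argue by contradiction: set $g:=\nu\wpg-V$, assume $g\not\preccurlyeq\bot$, and use finiteness of $S$ to pick a \emph{gap maximizer} $s^\star$ with $g(s^\star)=\max_{s\in S}g(s)>0$. Since $g(s)=1-1=0$ on $T$, we have $s^\star\notin T$. The heart of the proof is to exhibit an action $a^\star\in\textsc{Av}(s^\star)$ such that every $s'\in\textsc{Post}(s^\star,a^\star)$ satisfies $g(s')=g(s^\star)$---i.e., maximality is inherited by successors.

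For the inheritance step, the key is an \emph{asymmetric} choice of $a^\star$. When $s^\star\in\MaxState\setminus T$, I would take $a^\star$ attaining Maximizer's maximum in the widest-path Bellman fixed-point equation $\nu\wpg(s^\star)=W_{\nu\wpg}(s^\star)=\max_a\min\bigl(\phi_{\nu\wpg}(s^\star,a),\max_{s'\in\textsc{Post}(s^\star,a)}\nu\wpg(s')\bigr)$, obtained from \cref{lem:wpvf}(ii) and \cref{def:wpbellman}. When $s^\star\in\MinState\setminus T$, I would take $a^\star$ to be Minimizer's SG-optimal action, so that $V(s^\star)=\phi_V(s^\star,a^\star)$ from $V=\mu\bellman$. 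In either case, the $\min$ inside $\wpbellman_{\nu\wpg}$ is dominated by its first argument, giving $\nu\wpg(s^\star)\le\phi_{\nu\wpg}(s^\star,a^\star)$; combining with $V(s^\star)\ge\phi_V(s^\star,a^\star)$ (Max case) or $V(s^\star)=\phi_V(s^\star,a^\star)$ (Min case) and subtracting yields $g(s^\star)\le\sum_{s'}\delta(s^\star,a^\star,s')\,g(s')$. The reverse inequality is immediate from $g(s')\le g(s^\star)$ and $\sum_{s'}\delta(s^\star,a^\star,s')=1$, so the inequality is tight, and \cref{lem:maxVsAvg}(ii) forces $g(s')=g(s^\star)$ whenever $\delta(s^\star,a^\star,s')>0$.

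To close the argument, I would iterate the inheritance along a finite widest path ending in $T$---and this is exactly where the lfp structure $W_{\nu\wpg}=\mu\wpbellman_{\nu\wpg}$ highlighted in \cref{sec:maxPresPrin} plays its role. Fixing Maximizer's WPG-optimal memoryless strategy $\MaxStrategy^\star$ and Minimizer's SG-optimal memoryless strategy $\MinStrategy^\star$, the positivity of $\nu\wpg(s^\star)=W_{\nu\wpg}(s^\star)$ together with memoryless determinacy of 2-player widest path games yields a finite path $\rho'=s^\star a_0 s_1\ldots s_k$ consistent with $(\MaxStrategy^\star,\MinStrategy^\star)$, reaching $s_k\in T$, and of positive path width with respect to $\nu\wpg$. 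A direct induction on $i$ along $\rho'$, applying the inheritance claim at $a_i=\MaxStrategy^\star(s_i)$ or $a_i=\MinStrategy^\star(s_i)$ at each step, shows every $s_i$ to be a gap maximizer; since $s_k\in T$ forces $g(s_k)=0\neq g(s^\star)$, we reach the desired contradiction.

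The main obstacle I anticipate is the asymmetric action choice in the inheritance step: using Maximizer's \emph{WPG-optimal} action but Minimizer's \emph{SG-optimal} action is essential, as swapping either side's optimality criterion breaks the crucial inequality $g(s^\star)\le\sum_{s'}\delta(s^\star,a^\star,s')\,g(s')$. A secondary hurdle is producing the finite witnessing path $\rho'$; this can be justified either via memoryless determinacy of WPGs (implicit already in the correctness of \cref{algo_wpg}) or, more constructively, via the finite stabilization of the Kleene iterates $\wpbellman_{\nu\wpg}^n(\bot)$ on the finite state space $S$.
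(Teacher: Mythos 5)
Your second half---the maximality-inheritance argument for $\nu\wpg\preccurlyeq V$---matches the paper's proof essentially step for step: the same asymmetric action choice (Maximizer's WPG-optimal action extracted from $\nu\wpg=\wpbellman_{\nu\wpg}(\nu\wpg)$, Minimizer's SG-optimal action extracted from $V=\bellman V$), the same use of \cref{lem:maxVsAvg} to collapse the $\min$ to $\phi_{\nu\wpg}(s^\star,a^\star)$ and then to propagate gap-maximality to all successors in $\textsc{Post}(s^\star,a^\star)$, and the same closing contradiction between ``$T$ is unreachable from $s^\star$ under the induced strategy pair'' and ``the widest path width at $s^\star$ under that pair is at least $\nu\wpg(s^\star)>0$, so some path of positive width reaches $T$''. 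That part is correct.

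The genuine gap is in your opening reduction. From ``$V$ is a fixed point of $\wpg$'' you get $\mu\wpg\preccurlyeq V\preccurlyeq\nu\wpg$, and proving $\nu\wpg\preccurlyeq V$ then yields $V=\nu\wpg$---but it does \emph{not} yield $V=\mu\wpg$, nor uniqueness of the fixed point, both of which the theorem asserts. Nothing in your argument rules out a fixed point of $\wpg$ lying strictly below $V$; a priori $\wpg$ could have two fixed points with $V$ the greater one, and your contradiction argument is silent about that. The paper closes this with a separate first step: if $f=\wpg f$, then $f=W_f=\mu\wpbellman_f$, so $f=\wpbellman_f f$; by \cref{lem:maxVsAvg}(i) one has $\phi_f(s,a)\leq\max_{s'\in\textsc{Post}(s,a)}f(s')$ for $s\notin T$, so the term $\min\bigl(\phi_f(s,a),\max_{s'\in\textsc{Post}(s,a)}f(s')\bigr)$ in \cref{def:wpbellman} collapses to $\phi_f(s,a)$ and $\wpbellman_f f=\bellman f$; hence $f=\bellman f$ and $f\succcurlyeq\mu\bellman=V$ by \cref{thm:wellknown}. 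This shows every fixed point of $\wpg$ dominates $V$, giving $V=\mu\wpg$ and, together with your half, uniqueness. You need to add this (short) argument; the rest of your proposal stands as written.
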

\begin{proof}
We first show that $V = \mu\wpg$. Because $V = \wpg V$ (\cref{lem:VisFixWPOpr}), it suffices to prove that $f = \wpg f$ implies $f \succcurlyeq V$. Suppose $f = \wpg f$, which yields $f =\wpg f = W_f = \mu\wpbellman_f$, and thus $f = \wpbellman_f f$. It is also easy to see from \cref{lem:maxVsAvg} that $\phi_f(s,a) \leq \max_{s' \in \textsc{Post}(s,a)} f(s')$ for all $s \in S \setminus T$ and $a \in \textsc{Av}(s)$. Thus, the term $\min(\phi_f(s,a), \max_{s' \in \textup{\textsc{Post}}(s,a)} f(s'))$ in \cref{def:wpbellman} for $\wpbellman_f f$ becomes $\phi_f(s,a)$, matching exactly with \cref{def:reachabilityBellmanOpr} for $\bellman f$. This implies $\wpbellman_f f = \bellman f$, and so $f = \bellman f$. Because $V = \mu\bellman$ and $f$ is a fixed point of $\bellman$, we get $f \succcurlyeq V$.

It follows that $V \preccurlyeq \nu\wpg$. We shall show that $V = \nu\wpg$. We argue by contradiction; assume $V \prec \nu\wpg$. Then, there is $s \in S$ with $V(s) < \nu\wpg(s)$. Let $s^\star$ be a \emph{gap maximizer}, i.e., $(\nu\wpg - V)(s^\star) \geq (\nu\wpg - V)(s)$ for each $s \in S$. Note that we can pick such $s^\star$ as $S$ is finite. By our choice of $s^\star$, we get $(\nu\wpg-V)(s^\star) > 0$. It follows that $s^\star \not\in T$ (otherwise $V(s^\star) = 1$ and the gap is $\le 0$).

We now distinguish cases on whether $s^\star \in \MaxState$ or $s^\star \in \MinState$, and show that, in both cases, maximality is inherited under suitable actions.

Assume $s^\star \in \MaxState$. Since $\nu\wpg = \wpg(\nu\wpg) = W_{\nu\wpg} = \mu\wpbellman_{\nu\wpg}$, we get $\nu\wpg = \wpbellman_{\nu\wpg}(\nu\wpg)$ and there must be an action $a_\wpbellman \in \textsc{Av}(s^\star)$ such that $a_\wpbellman$ is optimal for the widest path objective (i.e., $\nu\wpg(s^\star) = \min(\phi_{\nu\wpg}(s^\star,a_\wpbellman), \max_{s' \in \textsc{Post}(s^\star,a_\wpbellman)} \nu\wpg(s'))$). By \cref{lem:maxVsAvg}, $\nu\wpg(s^\star) = \phi_{\nu\wpg}(s^\star,a_\wpbellman)$. This action $a_\wpbellman$, however, can be suboptimal for the reachability objective, giving $V(s^\star) \geq \phi_V(s^\star,a_\wpbellman)$. Thus, $(\nu\wpg - V)(s^\star) \leq \phi_{\nu\wpg}(s^\star,a_\wpbellman) - \phi_V(s^\star,a_\wpbellman) = \phi_{(\nu\wpg-V)}(s^\star,a_\wpbellman) \leq \max_{s' \in \textsc{Post}(s^\star,a_\wpbellman)} (\nu\wpg - V)(s') \leq (\nu\wpg - V)(s^\star)$. Hence, by \cref{lem:maxVsAvg},
$(\nu\wpg - V)(s^\star) = (\nu\wpg - V)(s')$ for all $s' \in \textsc{Post}(s^\star,a_\wpbellman)$. Therefore, being a gap maximizer is inherited under $a_\wpbellman$.

Now assume $s^\star \in \MinState$. Since $V = \mu\bellman$, we get $V = \bellman V$ and there must be an action $a_\bellman \in \textsc{Av}(s^\star)$ such that $a_\bellman$ is optimal for the reachability objective (i.e., $V(s^\star) = \phi_V(s^\star,a_\bellman)$). This action, however, can be suboptimal for the widest path objective, giving $\nu\wpg(s^\star) \leq \min(\phi_{\nu\wpg}(s^\star,a_\bellman), \max_{s' \in \textsc{Post}(s^\star,a_\bellman)} \nu\wpg(s')) = \phi_{\nu\wpg}(s^\star,a_\bellman)$, where the equality is by \cref{lem:maxVsAvg}(i). So, $(\nu\wpg - V)(s^\star) \leq \phi_{\nu\wpg}(s^\star,a_\bellman) - \phi_V(s^\star,a_\bellman) \leq (\nu\wpg - V)(s^\star)$. Thus, by \cref{lem:maxVsAvg}(ii), $(\nu\wpg - V)(s^\star) = (\nu\wpg - V)(s')$ for all $s' \in \textsc{Post}(s^\star,a_\bellman)$. Hence, being a gap maximizer is inherited under $a_\bellman$.

We now look at the stochastic game in which Maximizer always chooses $\MaxStrategy(\cdot) = a_\wpbellman$ and Minimizer always chooses $\MinStrategy(\cdot) = a_\bellman$. The above reasoning is valid for any gap maximizer $s^\star$. So we conclude that all states $s$ reachable from $s^\star$ under $(\MaxStrategy,\MinStrategy)$ are gap maximizers (w.r.t.\ $\nu\wpg-V$). Since $\nu\wpg(s) = V(s) = 1$ for $s \in T$, it follows that $T$ is unreachable from $s^\star$ under $(\MaxStrategy,\MinStrategy)$.

Now note that, under the widest path objective and $(\MaxStrategy,\MinStrategy)$, Maximizer plays optimally while Minimizer may not. Thus, the widest path width at a state $s$ is at least $\nu\wpg(s)$. This is true for $s^\star$. Since $(\nu\wpg - V)(s^\star) > 0$, it implies $\nu\wpg(s^\star) > 0$. Therefore, under $(\MaxStrategy,\MinStrategy)$ and by the definition of the widest path width, there is a path from $s^\star$ to $T$ of width at least $\nu\wpg(s^\star) > 0$. This contradicts the above that $T$ is unreachable from $s^\star$ under $(\MaxStrategy,\MinStrategy)$. \qed
\end{proof}

\begin{mytheorem}\label{thm:ourBVICorrectness}
Our proposed algorithm $\textup{\textsc{2WP-BVI}}(\sg, \varepsilon)$ surely terminates and returns a value that is $\varepsilon$-close to $V(\init)$ for all inputs $(\sg,\varepsilon)$.
\end{mytheorem}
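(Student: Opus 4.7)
The plan is to prove termination and $\varepsilon$-precision via a standard BVI pattern: maintain the loop invariant $\ell \preccurlyeq V \preccurlyeq u$, show that the pointwise gap $u(\init)-\ell(\init)$ tends to $0$, and then read off the precision guarantee from the averaging.

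First I would establish the invariant $\ell \preccurlyeq V \preccurlyeq u$ throughout the while loop by induction on the iteration count. The base case is immediate from $\bot \preccurlyeq V \preccurlyeq \top$. For the induction step, the update $\ell \gets \bellman\ell$ preserves $\ell \preccurlyeq V$ because $\bellman$ is monotone and $V = \bellman V = \mu\bellman$ (\cref{thm:wellknown}). Dually, the update $u \gets \wpg u$ preserves $u \succcurlyeq V$ because $\wpg$ is monotone (\cref{lem:VisFixWPOpr}(i)) and $V = \wpg V$ (\cref{lem:VisFixWPOpr}(ii)). This invariant gives correctness: whenever the algorithm returns $\tfrac{1}{2}(\ell(\init)+u(\init))$, the number $V(\init) \in [\ell(\init), u(\init)]$, so the returned value lies within distance $\tfrac{1}{2}(u(\init)-\ell(\init)) \leq \varepsilon$ of $V(\init)$.

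Next I would argue convergence of each sequence. Let $\ell^{(i)} := \bellman^{i}\bot$ and $u^{(i)} := \wpg^{i}\top$. Since $\bellman$ is $\omega$-continuous, Kleene's theorem (\cref{thm:kleene}) gives $\sup_{i} \ell^{(i)} = \mu\bellman = V$. Since $\wpg$ is $\omega^{\op}$-cocontinuous (\cref{lem:VisFixWPOpr}(i)), Kleene's theorem also gives $\inf_{i} u^{(i)} = \nu\wpg$, and by the key \cref{thm:key} we have $\nu\wpg = V$. Because $S$ is finite, pointwise monotone convergence to $V$ at the finitely many states is the same as real-valued convergence; in particular, $\ell^{(i)}(\init) \uparrow V(\init)$ and $u^{(i)}(\init) \downarrow V(\init)$, so $u^{(i)}(\init) - \ell^{(i)}(\init) \to 0$.

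Termination then follows immediately: for the given $\varepsilon > 0$, there exists $N \in \mathbb{N}$ such that for every $i \geq N$ the gap $u^{(i)}(\init) - \ell^{(i)}(\init) \leq 2\varepsilon$, at which point the loop condition fails and the algorithm returns. I expect no serious technical obstacle here: all the heavy lifting is done by \cref{thm:key}, which delivers $\nu\wpg = V$ and hence rules out the usual BVI failure mode of the over-approximation sequence stabilizing strictly above $V$. The only subtle point worth making explicit is that the Kleene fixed points established ordinal-indexed at $\omega$ translate into genuine finite-step $\varepsilon$-convergence precisely because of the finiteness of $S$.
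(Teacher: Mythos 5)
Your proposal is correct and follows essentially the same route as the paper's proof: sandwiching $V$ between the two monotone sequences via \cref{thm:wellknown} and \cref{thm:key}, invoking the ($\omega$-co)continuity of $\bellman$ and $\wpg$ to get convergence at $\omega$, and concluding termination and $\varepsilon$-closeness from the vanishing gap at $\init$. The paper states this more tersely as a single chain of inequalities, but the content is identical.
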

\begin{proof}
We get $\bot \preccurlyeq \bellman\bot \preccurlyeq \cdots \preccurlyeq \mu\bellman = V = \nu\wpg \preccurlyeq \cdots \preccurlyeq \wpg\top \preccurlyeq \top$ from the properties of $\bellman$ and $\wpg$ (\cref{thm:wellknown,thm:key}). Moreover, both under- and over-approximation sequences converge after $\omega$ steps, because the continuity of $\bellman$ (well-known) and $\wpg$ (\cref{lem:VisFixWPOpr}). Thus, $(\wpg^i\top)(\init)-(\bellman^i\bot)(\init) \leq 2\varepsilon$ holds for some $i \in \mathbb{N}$ and $\frac{1}{2}((\bellman^i\bot)(\init)+(\wpg^i\top)(\init))$ is $\varepsilon$-close to $V(\init)$. \qed
\end{proof}

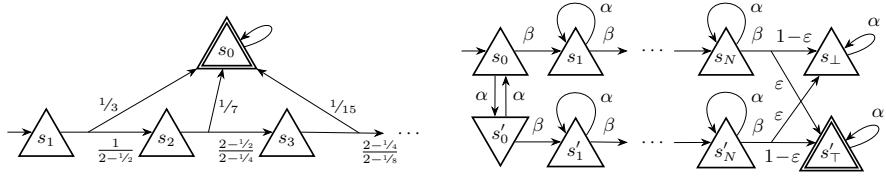
\begin{figure}[t]
    \centering
    \scalebox{.8}{\begin{tikzpicture}[
    maximizer/.style={regular polygon,regular polygon sides=3, inner sep=1pt, anchor=north},
    minimizer/.style={maximizer, shape border rotate=180}
]
    \node [state, maximizer] at (2,0) (s2) {$s_1$};
    \node [state, maximizer] at (4,0) (s3) {$s_2$};
    \node [state, maximizer] at (6,0) (s4) {$s_3$};
    \node [draw=none, anchor=north] at (8,-0.25) (s5) {$\ \cdots$};
    \node [state, maximizer, accepting] at (5,1.5) (s0) {$s_0$};

    \path ($(s2.150) - (3ex,0)$) edge (s2.150);
    \path (s0) edge[out=15, in=40, looseness=15] (s0);
    \path (s2.30) edge node[below, pos=0.6]{$\frac{1}{2-\textrm{\sfrac{1}{2}}}$} coordinate[pos=0.3](s23) (s3.150);
    \path (s23) edge node[above left=-3pt, pos=0.3]{\sfrac{1}{3}} (s0.210);
    \path (s3.30) edge node[below, pos=0.6]{$\frac{2-\textrm{\sfrac{1}{2}}}{2-\textrm{\sfrac{1}{4}}}$} coordinate[pos=0.3](s34) (s4.150);
    \path (s34) edge node[above right, pos=0.2]{\sfrac{1}{7}} (s0);
    \path (s4.30) edge node[below, pos=0.95]{$\frac{2-\textrm{\sfrac{1}{4}}}{2-\textrm{\sfrac{1}{8}}}$} coordinate[pos=0.7](s45) (s5);
    \path (s45) edge node[above right=-4pt, pos=0.3]{\sfrac{1}{15}} (s0.330);
\end{tikzpicture}}\quad\scalebox{.8}{\begin{tikzpicture}[
    maximizer/.style={regular polygon,regular polygon sides=3, inner sep=1pt, anchor=north},
    minimizer/.style={maximizer, shape border rotate=180}
]
    \node [state, maximizer] at (0,0) (s0) {$s_0$};
    \node [state, maximizer] at (1.25,0) (s1) {$s_1$};
    \node [draw=none, anchor=north] at (2.5,-0.225) (sdot) {$\ \cdots$};
    \node [state, maximizer] at (3.75,0) (sn) {\makebox[\widthof{$s_0$}][c]{$s_N$}};
    \node [state, maximizer] at (5.5,0) (snn) {\makebox[\widthof{$s_0$}][c]{$s_\bot$}};
    \node [state, minimizer] at (0,-1.5) (ss0) {$s'_0$};
    \node [state, maximizer] at (1.25,-1.5) (ss1) {$s'_1$};
    \node [draw=none, anchor=north] at (2.5,-1.75) (ssdot) {$\ \cdots$};
    \node [state, maximizer] at (3.75,-1.5) (ssn) {\makebox[\widthof{$s'_0$}][c]{$s'_N$}};
    \node [state, maximizer, accepting] at (5.5,-1.5) (ssnn) {\makebox[\widthof{$s'_0$}][c]{$s'_\top$}};

    \path (s0.252) edge node[left]{$\alpha$} (ss0.108);
    \path (ss0.72) edge node[right]{$\alpha$} (s0.288);

    \path ($(s0.150) - (3ex,0)$) edge (s0.150);
    \path (s0.30) edge node[above, pos=0.3]{$\beta$} (s1.150);
    \path (s1) edge[out=55, in=120, looseness=8] node[right=2pt, pos=0.35]{$\alpha$} (s1);
    \path (s1.30) edge node[above]{$\beta$} (sdot);
    \path (sdot) edge (sn.150);
    \path (sn) edge[out=55, in=120, looseness=8] node[right=2pt, pos=0.35]{$\alpha$} (sn);
    \path (sn.30) edge node[above, pos=0.225]{$\beta$} coordinate[pos=0.4](sh) node[above, pos=0.7]{$1\!-\!\varepsilon$} (snn.150);
    \path (sh) edge node[left, pos=0.4]{$\varepsilon$} (ssnn);
    \path (snn) edge[out=15, in=40, looseness=15] node[above=2pt]{$\alpha$} (snn);

    \path (ss0.330) edge node[above]{$\beta$} (ss1.150);
    \path (ss1) edge[out=55, in=120, looseness=8] node[right=2pt, pos=0.35]{$\alpha$} (ss1);
    \path (ss1.30) edge node[above]{$\beta$} (ssdot);
    \path (ssdot) edge (ssn.150);
    \path (ssn) edge[out=55, in=120, looseness=8] node[right=2pt, pos=0.35]{$\alpha$} (ssn);
    \path (ssn.30) edge node[above, pos=0.225]{$\beta$} coordinate[pos=0.4](ssh) node[below, pos=0.55]{$1\!-\!\varepsilon$} (ssnn.147);
    \path (ssh) edge node[left, pos=0.4]{$\varepsilon$} (snn);
    \path (ssnn) edge[out=15, in=40, looseness=15] node[above=2pt]{$\alpha$} (ssnn);
\end{tikzpicture}}
    \caption{Left: an infinite-state SG giving $\mu\bellman \prec \nu\bellman$ and $\mu\wpg \prec \nu\wpg$. Right: the benchmark \emph{ECchain} ($N$ is the model parameter). The labels $1$ for $\delta(s,a,s') = 1$ are omitted.}
    \label{fig:ex_infsg}\label{fig:chain}
\end{figure}

\begin{myremark}\label{remark:finiteness}
The maximality inheritance principle (cf.\ \cref{sec:maxPresPrin}) requires the state space to be finite. Consider \cref{fig:ex_infsg} (left) as an example. For $i > 0$, the chance to go from $s_i$ to $s_{i+1}$ is $\frac{2-(\textrm{\sfrac{1}{2}})^{i-1}}{2-(\textrm{\sfrac{1}{2}})^i}$ while to $s_0$ is $\frac{1}{2^{i+1}-1}$. This results in $\mu\bellman = \mu\wpg = V$, where $V(s_i) = 1 - \frac{2-(\textrm{\sfrac{1}{2}})^{i-1}}{2} = \frac{1}{2^i}$ for $i \in \mathbb{N}$, whereas $V \prec \nu\bellman = \nu\wpg = \top$.
\end{myremark}

\subsection{1WP-BVI and Its Comparison with 2WP-BVI}\label{subsec:compare}

\begin{algorithm}[t]
\caption{1WP-BVI algorithm of~\cite{DBLP:conf/cav/PhalakarnTHH20}.}\label{algo_cav20}
\DontPrintSemicolon
\SetAlgoNoLine
$\textsc{1WP-BVI}(\sg, \varepsilon)$\\ \Indp
    $\ell(s) \gets 0$ for all $s \in S$; $u(s) \gets 1$ for all $s \in S$ \\
    \While{$u(\init) - \ell(\init) > 2\varepsilon$}{
        $\ell \gets \bellman \ell$ \\
        \tikzmk{A}$\sg' \gets \textsc{PlyrRdct}(\sg,\ell)$; $u \gets \wpg_{\sg'} u$\tikzmk{B}\boxitt{gray}}
    \Return $\frac{1}{2}(\ell(\init) + u(\init))$
\end{algorithm}

Recall that our 2WP-BVI improves upon \emph{1WP-BVI} of~\cite{DBLP:conf/cav/PhalakarnTHH20}. Here, we formally describe 1WP-BVI and provide a comparison between 1WP-BVI and 2WP-BVI.

1WP-BVI is shown as \cref{algo_cav20} and works as follows: the under-approximation sequence is computed using the Bellman operator $\bellman$ (\cref{def:reachabilityBellmanOpr}). For the over-approximation sequence, it starts with \emph{player reduction} $\sg' \gets \textsc{PlyrRdct}(\sg,\ell)$, which removes Minimizer's suboptimal actions w.r.t. $\ell$ and converts all Minimizer's states into Maximizer's. Formally, this yields an MDP $\sg'$ where (i) for $s \in \MinState$, $\textsc{Av}'(s) = \arg\min_{a \in \textsc{Av}(s)} \phi_\ell(s,a)$, and (ii) $\MaxStateX = S$, $\MinStateX = \emptyset$.

Then, 1WP-BVI computes a (1-player, Maximizer-only) widest path width on a certain graph derived from the resulting MDP. In \cref{algo_cav20}, this is $u \gets \wpg_{\sg'}u$. In this work's notation, it corresponds to $u \gets \textsc{DijkstraWidestPathGame}(\sg',u)$ (\cref{algo_wpg}). Since $\MinStateX = \emptyset$, \cref{algo_wpg:line9} of \cref{algo_wpg} is skipped. So, 1WP-BVI can outperform 2WP-BVI, as removing Minimizer's suboptimal actions in 2WP-BVI is costly.

However, 1WP-BVI has two drawbacks. Firstly, player reduction complicates the convergence proof for 1WP-BVI. Since the MDP $\sg'$ can vary in each iteration,~\cite[Thm. 4.11]{DBLP:conf/cav/PhalakarnTHH20} provides a non-constructive proof via the infinitary pigeonhole principle. In contrast, the fixed-point uniqueness proof for 2WP-BVI (\cref{thm:key}) is cleaner and more structured, potentially enabling further extensions.

Secondly, player reduction in 1WP-BVI imposes a dependency between the under- and over-approximations. For example, slow convergence of the under-approximations delays that of the over-approximations.

As an instance, this is seen in the SGs of the \emph{ECchain} benchmarks (\cref{fig:chain}, right). In these SGs, the top path has reachability $\varepsilon$ and the bottom path $1-\varepsilon$. In the execution of 1WP-BVI (\cref{algo_cav20}), it takes roughly $N$ iterations for $\ell(s'_\top) = 1$ to propagate to $s'_1$. Thus, player reduction is ineffective: no actions of $s'_0$ are removed, and $u(s_0)$ remains $1-\varepsilon$ for about $N$ iterations, until $\ell(s'_1) \neq 0$. In contrast, 2WP-BVI detects the suboptimal action $\beta$ for $s'_0$ in the first iteration, and can assign $u(s_0) = \varepsilon$. Since $\ell(s_0) = 0$, 2WP-BVI immediately terminates.

\section{Related Work}\label{sec:relatedWork}

Besides the works discussed elsewhere, we review the following closely related works. To our best knowledge, these are the only available VI-based tools for SG reachability that provide precision guarantees. They are compared with 1WP-BVI and our 2WP-BVI via experiments in \cref{sec:exp}.

\myparagraphii{End component-based BVI (EC-BVI)}{\cite{DBLP:conf/cav/KelmendiKKW18}.} This is the first work on BVI for SG reachability. In each iteration, it first performs \emph{player reduction} to obtain an MDP. Then, end-component analysis~\cite{DBLP:conf/atva/BrazdilCCFKKPU14,DBLP:journals/tcs/HaddadM18}---called \emph{deflating}---is applied to the MDP to produce a tighter over-approximation. This work was later extended to other quantitative objectives such as total reward and mean payoff~\cite{DBLP:conf/lics/KretinskyMW23}.

\myparagraphii{PET2\footnote[3]{On the name PET~\cite{qcomp} vs. PET2~\cite{DBLP:conf/cav/MeggendorferW24}: in~\cite{qcomp}, the tool from the PET2 team is called PET (``PET has recently been extended to support reachability objectives for SGs.''). Meanwhile,~\cite{DBLP:conf/cav/MeggendorferW24} says ``PET2 is an extension of PET1~\cite{DBLP:conf/atva/Meggendorfer22} (only applicable to MDPs).'' Given these, we assume that the discussions on PET in~\cite{qcomp} pertain to PET2 from~\cite{DBLP:conf/cav/MeggendorferW24}. This aligns with our understanding of PET2 based on~\cite{DBLP:conf/cav/MeggendorferW24} and its source code.}}{\cite{DBLP:conf/cav/MeggendorferW24}.} It extends PET~\cite{DBLP:conf/atva/Meggendorfer22} and is based on EC-BVI. It incorporates partial exploration techniques and various performance optimizations.

\myparagraphii{Optimistic value iteration (OVI)}{\cite{DBLP:conf/atva/AzeemEKSW22}.} It extends OVI for MDPs~\cite{DBLP:conf/cav/QuatmannK18} to SGs. Given $\ell \preccurlyeq \mu\bellman$, it checks if $\mu\bellman \preccurlyeq u := \ell + 2\varepsilon$. If so, $\ell+\varepsilon$ is $\varepsilon$-close to $\mu\bellman$. This can be done via Knaster--Tarski's theorem~\cite{knaster,tarski} or Park's induction~\cite{park}: $\bellman u \preccurlyeq u$ implies $\mu\bellman \preccurlyeq u$. End-component analysis is needed to ensure termination.

\myparagraphii{Sound value iteration (SVI)}{\cite{DBLP:journals/corr/abs-2411-11549}.} It extends SVI for MDPs~\cite{DBLP:conf/cav/HartmannsK20} to SGs. Using the set $S_{=0}$ (cf. \cref{sec:maxPresPrin}), the reachability from $s$ to $T$ is over-approximated by that of reaching $T$ from $s$ in $k$ steps, plus that of remaining in $S \setminus (T \cup S_{=0})$ for $k$ steps. It employs end-component analysis and other iterative mechanisms. According to~\cite{DBLP:journals/corr/abs-2411-11549}, SVI can take fewer iterations to terminate compared to EC-BVI.

\section{Experiments}\label{sec:exp}

We implemented our proposed 2WP-BVI (\cref{algo_iiwpg}) as an extension of PRISM-games~\cite{DBLP:conf/cav/KwiatkowskaN0S20}. Our artifact is available at~\cite{artifact}. The procedure $u \gets \wpg u$ in \cref{algo_iiwpg} is concretely implemented as $u \gets \textup{\textsc{DijkstraWidestPathGame}}(\sg, u)$ (\cref{algo_wpg}).

Experiments were performed on MacBook Pro\textsuperscript{\textregistered} with 2.3 GHz Intel\textsuperscript{\textregistered} Core\textsuperscript{\texttrademark} i9 and 32 GB of RAM. Each tool was executed inside a Docker container with OpenJDK JRE-17. The time-out (T/O) is 10 minutes. The precision is $\varepsilon = 10^{-6}$.

\begin{table}[p]
    \centering
    \caption{Experimental results. Following each benchmark name is a model parameter (the bigger the more complex). The execution times are in seconds; T/O (time-out) is 600 seconds. The fastest algorithms for each benchmark are shown in bold.}\label{tab:exp}
    \newlength{\maxlen}
\settowidth{\maxlen}{\textbf{2WP-BVI\,}}
\newcommand*{\head}[1]{%
    \begin{sideways}
      \makebox[\maxlen][l]{\textbf{#1}}
    \end{sideways}}

\begin{tabular}{lC{0.75cm}R{1cm}R{1cm}R{1cm}R{1cm}R{1cm}R{1cm}}
\toprule
\multicolumn{1}{c}{\raisebox{1pt}{\textbf{Benchmarks}}} & \head{Sources} & \head{EC-BVI} & \head{PET2} & \head{OVI} & \head{SVI} & \head{1WP-BVI} & \head{2WP-BVI} \head{(ours)} \\
\toprule
avoid (10\texttimes 10, exit) & \cite{DBLP:conf/cav/ChatterjeeKWW20} & 21.4 & \textbf{7.6} & 24.7 & T/O & 8.5 & 8.5 \\
avoid (10\texttimes 10, find) & & 13.4 & \textbf{5.0} & 17.6 & 13.8 & 15.5 & 17.2 \\
avoid (15\texttimes 15, exit) & & 185.4 & 28.0 & 191.6 & T/O & \textbf{26.8} & 32.2 \\
avoid (15\texttimes 15, find) & & 208.4 & \textbf{19.1} & 127.3 & 117.4 & 63.2 & 67.0 \\
avoid (20\texttimes 20, exit) & & T/O & 104.7 & T/O & T/O & \textbf{90.3} & 110.4 \\
avoid (20\texttimes 20, find) & & 551.1 & \textbf{68.6} & T/O & 564.3 & 178.5 & 188.0 \\
\midrule
cloud (5) & \cite{DBLP:conf/monterey/CalinescuKJ12} & 293.8 & T/O & 2.8 & 2.8 & \textbf{1.9} & \textbf{1.9} \\
cloud (6) & & T/O & T/O & 5.8 & 5.4 & \textbf{3.2} & 3.7 \\
cloud (7) & & T/O & T/O & 19.2 & 20.2 & \textbf{6.9} & 8.1 \\
\midrule
dice (25) & \cite{DBLP:conf/cav/KwiatkowskaN0S20} & 2.9 & \textbf{2.7} & 2.9 & 3.7 & 2.8 & 3.0 \\
dice (50) & & 7.6 & \textbf{4.3} & 8.2 & 8.8 & 6.9 & 8.8 \\
dice (100) & & 45.3 & \textbf{11.3} & 49.9 & 51.5 & 20.2 & 24.5 \\
\midrule
hallway human (5\texttimes 5) & \cite{DBLP:conf/cav/ChatterjeeKWW20} & 2.5 & 2.7 & \textbf{2.4} & 2.5 & 15.8 & 13.4 \\
hallway human (10\texttimes 10) & & 11.7 & \textbf{10.2} & 11.7 & 11.9 & T/O & T/O \\
hallway human (15\texttimes 15) & & 78.0 & \textbf{48.7} & 59.3 & 59.3 & T/O & T/O \\
\midrule
investor (50) & \cite{DBLP:journals/tocl/McIverM07} & 21.8 & \textbf{11.5} & 27.1 & 38.5 & 25.2 & 26.3 \\
investor (100) & & 155.2 & \textbf{78.0} & 189.6 & 228.1 & 138.9 & 141.9 \\
\midrule
investors (2, 10) & \cite{DBLP:journals/fmsd/ChenFKPS13} & 8.4 & \textbf{6.0} & 12.5 & 12.3 & 8.4 & 8.1 \\
investors (2, 20) & & 31.9 & \textbf{20.1} & 39.5 & 82.6 & 46.8 & 50.3 \\
investors (2, 40) & & 166.6 & \textbf{86.4} & 200.2 & 346.5 & 252.8 & 271.3 \\
\midrule
mdsm (3) & \cite{5952980} & 4.8 & \textbf{4.2} & 5.0 & 8.8 & 5.1 & 5.2 \\
mdsm (4) & & 20.2 & \textbf{12.1} & 19.1 & 31.2 & 14.1 & 17.5 \\
mdsm (5) & & 90.5 & 54.1 & 82.8 & 139.3 & \textbf{52.9} & 61.0 \\
\midrule
BigMec (5000) & \cite{DBLP:journals/iandc/KretinskyRSW22} & T/O & \textbf{6.9} & 9.9 & T/O & 8.4 & 9.2 \\
BigMec (7500) & & T/O & \textbf{8.7} & 19.7 & T/O & 17.2 & 19.8 \\
BigMec (10000) & & T/O & \textbf{18.7} & 33.0 & T/O & 28.6 & 35.3 \\
BigMec (25000) & & T/O & \textbf{90.2} & 252.5 & T/O & 196.7 & 267.1 \\
\midrule
ManyMecs (5000) & \cite{DBLP:journals/iandc/KretinskyRSW22} & T/O & \textbf{24.6} & T/O & T/O & 56.6 & 42.3 \\
ManyMecs (7500) & & T/O & \textbf{55.0} & T/O & T/O & 127.2 & 102.5 \\
ManyMecs (10000) & & T/O & \textbf{112.9} & T/O & T/O & 233.6 & 201.6 \\
ManyMecs (25000) & & T/O & T/O & T/O & T/O & T/O & T/O \\
\midrule
manyECs (5000) & \cite{DBLP:conf/cav/PhalakarnTHH20} & T/O & 27.2 & T/O & T/O & \textbf{7.3} & 7.7 \\
manyECs (7500) & & 57.6 & T/O & T/O & T/O & 17.4 & \textbf{16.8} \\
manyECs (10000) & & T/O & 108.6 & T/O & T/O & \textbf{30.5} & 30.6 \\
manyECs (25000) & & T/O & T/O & T/O & T/O & 221.8 & \textbf{220.7} \\
\midrule
ECchain (5000) & ours & T/O & T/O & 15.3 & 76.8 & 12.4 & \textbf{5.5} \\
ECchain (7500) & & T/O & T/O & 36.9 & 202.5 & 28.4 & \textbf{12.6} \\
ECchain (10000) & & T/O & T/O & 64.7 & 393.3 & 48.5 & \textbf{22.7} \\
ECchain (25000) & & T/O & T/O & 539.8 & T/O & 374.2 & \textbf{183.9} \\
\bottomrule
\end{tabular}
    \vspace{-33pt}
\end{table}

The experimental results are in \cref{tab:exp}. Most benchmarks are from existing works~\cite{DBLP:conf/cav/ChatterjeeKWW20,DBLP:conf/monterey/CalinescuKJ12,DBLP:conf/cav/KwiatkowskaN0S20,DBLP:journals/tocl/McIverM07,DBLP:journals/fmsd/ChenFKPS13,5952980,DBLP:journals/iandc/KretinskyRSW22,DBLP:conf/cav/PhalakarnTHH20}; several are real-world case studies. The last four are artificial: \emph{BigMec}, \emph{ManyMecs}, \emph{manyECs} feature large or numerous end components, while \emph{ECchain} (\cref{fig:chain}, right) are newly designed to make the under-approximation sequence converge slowly.

The results demonstrate notable performance advantages of WP-based BVIs (2WP-BVI and 1WP-BVI) in benchmarks with many end components---chiefly, \emph{manyECs} and \emph{ECchain}, where either 2WP-BVI or 1WP-BVI performs best. Note that \emph{ECchain} contains many self-loop end components (see \cref{fig:chain}, right).

The difference between 2WP-BVI and 1WP-BVI is signified in \emph{ECchain}, where 2WP-BVI is constantly faster than 1WP-BVI. This is because, in 1WP-BVI, the over-approximation depends on the under-approximation. In \emph{ECchain}, the latter converges slowly, which drags the convergence of the former. Conversely, 2WP-BVI is free from this performance burden (see \cref{subsec:compare} for details).

Regarding OVI's performance, it is comparable to 2WP-BVI and 1WP-BVI, except on SGs with many end components (\emph{ManyMecs}, \emph{manyECs}). EC-BVI and SVI are generally slower than the others across many benchmarks. That said, they perform quite well on certain cases (e.g., \emph{investors} for EC-BVI, \emph{cloud} for SVI). Thus, EC-BVI or SVI may be preferable depending on the benchmark.

We find that PET2 is the fastest on many benchmarks---this is consistent with the results from the QComp 2023 Competition~\cite{qcomp}. Still, we find that (i) even in those cases, the performance gap with 2WP-BVI is rarely substantial (an exception is \emph{hallway human}), and (ii) there are some benchmarks where 2WP-BVI clearly outperforms PET2 (namely, \emph{cloud}, \emph{manyECs}, and \emph{ECchain}).

Moreover, a discussion in~\cite{qcomp} attributes a large part of PET2's performance advantage to engineering efforts such as partial exploration, loop unrolling, and time-memory trade-offs~\cite{DBLP:conf/cav/MeggendorferW24}. This discussion, together with \cref{tab:exp}, points to the performance potential of 2WP-BVI. It would be interesting to observe its performance after similar engineering efforts have been applied.

Overall, the results demonstrate 2WP-BVI's potential for broad applicability: there are benchmarks for which ours is the best performer; for most of the other benchmarks, ours is at most several times slower than the best performer. We note that the current implementation of 2WP-BVI is a prototype and leaves room for optimization, as discussed above.

\section{Conclusions and Future Work}\label{sec:conclusion}

We presented 2WP-BVI, a novel BVI algorithm for SG reachability that is based on widest path computations---thereby avoiding the need for end-component analysis---and improves upon 1WP-BVI~\cite{DBLP:conf/cav/PhalakarnTHH20} in terms of theoretical clarity. 2WP-BVI is built on a construct called widest path games, and we introduced a proof principle referred to as the maximality inheritance principle. Our experimental evaluation demonstrated the algorithm's practical relevance.

We plan to optimize our prototype in various ways, such as incorporating partial exploration and applying low-level performance enhancements. Additionally, we aim to further investigate the theoretical foundations of our constructions through the lens of lattice theory.

\begin{credits}
\subsubsection{\ackname} The authors would like to thank the reviewers for their comments on improving the manuscript and acknowledge the implementations of Azeem et al.~\cite{DBLP:journals/corr/abs-2411-11549}, Kelmendi et al.~\cite{DBLP:conf/cav/KelmendiKKW18}, Meggendorfer~\cite{qcomp_impl}, Meggendorfer--Weininger~\cite{DBLP:conf/cav/MeggendorferW24}, and Phalakarn et al.~\cite{DBLP:conf/cav/PhalakarnTHH20}, which we use as basis for our implementation and experiments. This work is supported by ERATO HASUO Metamathematics for Systems Design Project (No. JPMJER1603) and the ASPIRE grant No. JPMJAP2301, JST.
\end{credits}

\appendix
\section*{Appendix: Omitted Proofs}\label{appendix:omittedProofs}

\theorembodyfont{\itshape}
\newtheorem*{lem32}[mytheorem]{Lemma 3.2}
\begin{lem32}[max.\ vs.\ average]
Let $S$ be a finite set, $f\colon S\to [0,1]$ be a function, and $\delta\in \dist(S)$ be a distribution over $S$. Assume that $s^\star$ is an $f$-maximizer (meaning $f(s^\star) \geq f(s)$ for each $s\in S$).
\begin{enumerate}[label=(\roman*)]
    \item $f(s^\star)\geq \sum_{s\in S}\delta(s)\cdot f(s)$ holds (``the maximum is above the average'').
    \item Assume further that we have an equality $f(s^\star) = \sum_{s\in S}\delta(s)\cdot f(s)$ (``the maximum coincides with the average''). Then, for each $s\in S$, $\delta(s)>0$ implies $f(s^\star) = f(s)$.
\end{enumerate}
\end{lem32}
\begin{proof}
For (i), it is clear that $\sum_{s\in S}\delta(s)\cdot f(s) \leq \sum_{s\in S}\delta(s)\cdot f(s^\star) = f(s^\star)$, as $f(s) \leq f(s^\star)$ for all $s \in S$. For (ii), we prove by contraposition. Let there be $s' \in S$ with $\delta(s') > 0$ and $f(s^\star) > f(s')$. Then, $\sum_{s\in S}\delta(s)\cdot f(s) = \sum_{s\in S \setminus \{s'\}}\delta(s)\cdot f(s) + \delta(s')\cdot f(s') \leq \sum_{s\in S \setminus \{s'\}}\delta(s)\cdot f(s^\star) + \delta(s')\cdot f(s') < f(s^\star)$. \qed
\end{proof}

\newtheorem*{lem44}[mytheorem]{Lemma 4.4}
\begin{lem44}
(i) $\wpbellman_u$ is monotone, and (ii) $\mu\wpbellman_u = W_u$ (\cref{def:widestPathValueFunc}).
\end{lem44}
\begin{proof}[of (ii)]
For $k \in \mathbb{N}$ and $s \in \MaxState$, the value $((\wpbellman_u)^{k+1}\bot)(s)$ is the maximum widest path width that can be achieved at $s$ over all strategies when looking at all finite paths of length at most $k$ (i.e., $s_0 a_0 s_1 a_1 \ldots s_k$) that start at $s$. For $s \in \MinState$, the same applies with $((\wpbellman_u)^{k+1}\bot)(s)$ being the minimum widest path width. Due to the fact that, if $W_u(s) > 0$ then there is a path of width $W_u(s)$ from $s$ to $T$ of length at most $|S|-1$ (i.e. with no loops), we know that $W_u = (\wpbellman_u)^{|S|}\bot = (\wpbellman_u)^{|S|+1}\bot$. By \cref{lem:wpbellmanMonotone} and \cref{thm:cousot}, we obtain the under-approximation sequence $\bot \preccurlyeq \wpbellman_u \bot \preccurlyeq \cdots \preccurlyeq (\wpbellman_u)^{|S|}\bot = (\wpbellman_u)^{|S|+1}\bot = \mu\wpbellman_u = W_u$. \qed
\end{proof}

\newtheorem*{lem54}[mytheorem]{Lemma 5.4}
\begin{lem54}
(i) $\wpg$ is monotone and $\omega^{\op}$-cocontinuous (see \cref{thm:kleene}), and (ii) $V = \wpg V = W_V$.
\end{lem54}
\begin{proof}[of (i)]
Monotonicity is straightforward. For $\omega^{\op}$-cocontinuity (that is, $\wpg(\inf_{n<\omega} u_{n})=\inf_{n<\omega}\wpg(u_n)$ for a decreasing chain $u_{0}\succcurlyeq u_{1}\succcurlyeq\cdots$), we notice the following: for any positive $\varepsilon$, there is $N \in \mathbb{N}$ that is large enough, such that (i) $u_{N}(s)<(\inf_{n<\omega} u_{n})(s)+\varepsilon$ for each $s \in S$, and thus (ii) $\phi_{u_{N}}(s,a)<\phi_{\inf_{n<\omega} u_{n}}(s,a)+\varepsilon$ for each $s \in S$ and $a \in \textsc{Av}(s)$. This holds because there are only finitely many $s$'s and $a$'s.

Moreover, when $\varepsilon$ is small enough---say, it is smaller than the smallest difference of widths under $\inf_{n<\omega} u_{n}$---we can show that $\wpg(u_{N}) = W_{u_{N}}$ is given by the same strategies $(\MaxStrategy,\MinStrategy)$ (cf. \cref{def:widestPathValueFunc}) and moreover by the same path $\rho$ (cf. \cref{eq:widestPath}). In this case, by \cref{def:pathWidth}, we have $\wpg(u_N)\preccurlyeq\wpg (\inf_{n<\omega} u_{n})+\varepsilon$. We can take such $N$ for any positive $\varepsilon$ that is small enough; this establishes $\wpg(\inf_{n<\omega} u_{n})=\inf_{n<\omega}\wpg(u_n)$. \qed
\end{proof}

\bibliographystyle{splncs04}
\bibliography{ref}

\begin{thebibliography}{10}
\providecommand{\url}[1]{\texttt{#1}}
\providecommand{\urlprefix}{URL }
\providecommand{\doi}[1]{https://doi.org/#1}

\bibitem{DBLP:conf/isaac/AnderssonM09}
Andersson, D., Miltersen, P.B.: The complexity of solving stochastic games on graphs. In: Dong, Y., Du, D., Ibarra, O.H. (eds.) {ISAAC} 2009. vol.~5878, pp. 112--121. Springer (2009). \doi{10.1007/978-3-642-10631-6\_13}

\bibitem{qcomp}
Andriushchenko, R., Bork, A., Budde, C.E., {\v{C}}e{\v{s}}ka, M., Grover, K., Hahn, E.M., Hartmanns, A., Israelsen, B., Jansen, N., Jeppson, J., Junges, S., K{\"o}hl, M.A., K{\"o}nighofer, B., K{\v{r}}et{\'i}nsk{\'y}, J., Meggendorfer, T., Parker, D., Pranger, S., Quatmann, T., Ruijters, E., Taylor, L., Volk, M., Weininger, M., Zhang, Z.: Tools at the frontiers of quantitative verification. In: Beyer, D., Hartmanns, A., Kordon, F. (eds.) TOOLympics Challenge 2023. pp. 90--146. Springer (2025)

\bibitem{DBLP:conf/atva/AzeemEKSW22}
Azeem, M., Evangelidis, A., Kret{\'{\i}}nsk{\'{y}}, J., Slivinskiy, A., Weininger, M.: Optimistic and topological value iteration for simple stochastic games. In: Bouajjani, A., Hol{\'{\i}}k, L., Wu, Z. (eds.) {ATVA} 2022. vol. 13505, pp. 285--302. Springer (2022). \doi{10.1007/978-3-031-19992-9\_18}

\bibitem{DBLP:journals/corr/abs-2411-11549}
Azeem, M., Kret{\'{\i}}nsk{\'{y}}, J., Weininger, M.: Sound value iteration for simple stochastic games. CoRR  \textbf{abs/2411.11549} (2024). \doi{10.48550/arXiv.2411.11549}

\bibitem{DBLP:books/daglib/0020348}
Baier, C., Katoen, J.: Principles of model checking. {MIT} Press (2008)

\bibitem{DBLP:journals/dm/Baranga91}
Baranga, A.: The contraction principle as a particular case of {Kleene}'s fixed point theorem. Discret. Math.  \textbf{98}(1),  75--79 (1991)

\bibitem{bellman1958routing}
Bellman, R.: On a routing problem. Quarterly of applied mathematics  \textbf{16}(1),  87--90 (1958)

\bibitem{DBLP:conf/atva/BrazdilCCFKKPU14}
Br{\'{a}}zdil, T., Chatterjee, K., Chmelik, M., Forejt, V., Kret{\'{\i}}nsk{\'{y}}, J., Kwiatkowska, M.Z., Parker, D., Ujma, M.: Verification of {Markov} decision processes using learning algorithms. In: Cassez, F., Raskin, J. (eds.) {ATVA} 2014. vol.~8837, pp. 98--114. Springer (2014). \doi{10.1007/978-3-319-11936-6\_8}

\bibitem{DBLP:conf/monterey/CalinescuKJ12}
Calinescu, R., Kikuchi, S., Johnson, K.: Compositional reverification of probabilistic safety properties for large-scale complex {IT} systems. In: Calinescu, R., Garlan, D. (eds.) Monterey Workshop 2012. vol.~7539, pp. 303--329. Springer (2012). \doi{10.1007/978-3-642-34059-8\_16}

\bibitem{DBLP:journals/corr/abs-1106-1232}
Chatterjee, K., Fijalkow, N.: A reduction from parity games to simple stochastic games. In: D'Agostino, G., Torre, S.L. (eds.) {GandALF} 2011. {EPTCS}, vol.~54, pp. 74--86 (2011). \doi{10.4204/EPTCS.54.6}

\bibitem{DBLP:conf/spin/ChatterjeeH08}
Chatterjee, K., Henzinger, T.A.: Value iteration. In: Grumberg, O., Veith, H. (eds.) 25 Years of Model Checking - History, Achievements, Perspectives. Lecture Notes in Computer Science, vol.~5000, pp. 107--138. Springer (2008). \doi{10.1007/978-3-540-69850-0\_7}

\bibitem{DBLP:conf/cav/ChatterjeeKWW20}
Chatterjee, K., Katoen, J., Weininger, M., Winkler, T.: Stochastic games with lexicographic reachability-safety objectives. In: Lahiri, S.K., Wang, C. (eds.) {CAV} 2020. vol. 12225, pp. 398--420. Springer (2020). \doi{10.1007/978-3-030-53291-8\_21}

\bibitem{DBLP:journals/fmsd/ChenFKPS13}
Chen, T., Forejt, V., Kwiatkowska, M.Z., Parker, D., Simaitis, A.: Automatic verification of competitive stochastic systems. Formal Methods Syst. Des.  \textbf{43}(1),  61--92 (2013). \doi{10.1007/S10703-013-0183-7}

\bibitem{DBLP:journals/iandc/Condon92}
Condon, A.: The complexity of stochastic games. Inf. Comput.  \textbf{96}(2),  203--224 (1992). \doi{10.1016/0890-5401(92)90048-K}

\bibitem{DBLP:conf/popl/CousotC77}
Cousot, P., Cousot, R.: Abstract interpretation: {A} unified lattice model for static analysis of programs by construction or approximation of fixpoints. In: Graham, R.M., Harrison, M.A., Sethi, R. (eds.) {POPL} 1977. pp. 238--252. {ACM} (1977). \doi{10.1145/512950.512973}

\bibitem{DBLP:journals/nm/Dijkstra59}
Dijkstra, E.W.: A note on two problems in connexion with graphs. Numerische Mathematik  \textbf{1},  269--271 (1959). \doi{10.1007/BF01386390}

\bibitem{ford1956network}
Ford, L.R.: Network flow theory. Rand Corporation Paper, Santa Monica  (1956)

\bibitem{DBLP:journals/tcs/HaddadM18}
Haddad, S., Monmege, B.: Interval iteration algorithm for {MDP}s and {IMDP}s. Theor. Comput. Sci.  \textbf{735},  111--131 (2018). \doi{10.1016/J.TCS.2016.12.003}

\bibitem{DBLP:conf/tacas/HartmannsJQW23}
Hartmanns, A., Junges, S., Quatmann, T., Weininger, M.: A practitioner's guide to {MDP} model checking algorithms. In: Sankaranarayanan, S., Sharygina, N. (eds.) {TACAS} 2023. vol. 13993, pp. 469--488. Springer (2023). \doi{10.1007/978-3-031-30823-9\_24}

\bibitem{DBLP:conf/cav/HartmannsK20}
Hartmanns, A., Kaminski, B.L.: Optimistic value iteration. In: Lahiri, S.K., Wang, C. (eds.) {CAV} 2020. vol. 12225, pp. 488--511. Springer (2020). \doi{10.1007/978-3-030-53291-8\_26}

\bibitem{5952980}
Hildmann, H., Saffre, F.: Influence of variable supply and load flexibility on demand-side management. In: {EEM} 2011. pp. 63--68 (2011). \doi{10.1109/EEM.2011.5952980}

\bibitem{hoffman1966nonterminating}
Hoffman, A.J., Karp, R.M.: On nonterminating stochastic games. Management Science  \textbf{12}(5),  359--370 (1966)

\bibitem{DBLP:conf/cav/KelmendiKKW18}
Kelmendi, E., Kr{\"{a}}mer, J., Kret{\'{\i}}nsk{\'{y}}, J., Weininger, M.: Value iteration for simple stochastic games: Stopping criterion and learning algorithm. In: Chockler, H., Weissenbacher, G. (eds.) {CAV} 2018. vol. 10981, pp. 623--642. Springer (2018). \doi{10.1007/978-3-319-96145-3\_36}

\bibitem{knaster}
Knaster, B.: Un theoreme sur les functions d'ensembles. Ann. Soc. Polon. Math.  \textbf{6},  133--134 (1928), \url{https://cir.nii.ac.jp/crid/1571135651325476992}

\bibitem{DBLP:conf/lics/KretinskyMW23}
Kret{\'{\i}}nsk{\'{y}}, J., Meggendorfer, T., Weininger, M.: Stopping criteria for value iteration on stochastic games with quantitative objectives. In: {LICS} 2023. pp. 1--14. {IEEE} (2023). \doi{10.1109/LICS56636.2023.10175771}

\bibitem{DBLP:journals/iandc/KretinskyRSW22}
Kret{\'{\i}}nsk{\'{y}}, J., Ramneantu, E., Slivinskiy, A., Weininger, M.: Comparison of algorithms for simple stochastic games. Inf. Comput.  \textbf{289}(Part),  104885 (2022). \doi{10.1016/J.IC.2022.104885}

\bibitem{DBLP:conf/cav/KwiatkowskaN0S20}
Kwiatkowska, M., Norman, G., Parker, D., Santos, G.: {PRISM}-games 3.0: Stochastic game verification with concurrency, equilibria and time. In: Lahiri, S.K., Wang, C. (eds.) {CAV} 2020. vol. 12225, pp. 475--487. Springer (2020). \doi{10.1007/978-3-030-53291-8\_25}

\bibitem{DBLP:journals/tocl/McIverM07}
McIver, A., Morgan, C.: Results on the quantitative {\(\mathrm{\mu}\)}-calculus q{M}{\(\mathrm{\mu}\)}. {ACM} Trans. Comput. Log.  \textbf{8}(1), ~3 (2007). \doi{10.1145/1182613.1182616}

\bibitem{DBLP:conf/atva/Meggendorfer22}
Meggendorfer, T.: {PET} - {A} partial exploration tool for probabilistic verification. In: Bouajjani, A., Hol{\'{\i}}k, L., Wu, Z. (eds.) {ATVA} 2022. vol. 13505, pp. 320--326. Springer (2022). \doi{10.1007/978-3-031-19992-9\_20}

\bibitem{qcomp_impl}
Meggendorfer, T.: {QComp} 2023: Stochastic games - evaluation (2023). \doi{10.5281/zenodo.7831409}

\bibitem{DBLP:conf/cav/MeggendorferW24}
Meggendorfer, T., Weininger, M.: Playing games with your {PET:} extending the partial exploration tool to stochastic games. In: Gurfinkel, A., Ganesh, V. (eds.) {CAV} 2024. vol. 14683, pp. 359--372. Springer (2024). \doi{10.1007/978-3-031-65633-0\_16}

\bibitem{park}
Park, D.: Fixpoint induction and proofs of program properties. Machine Intelligence  \textbf{5} (1969), \url{https://cir.nii.ac.jp/crid/1573950399497019904}

\bibitem{DBLP:conf/cav/PhalakarnTHH20}
Phalakarn, K., Takisaka, T., Haas, T., Hasuo, I.: Widest paths and global propagation in bounded value iteration for stochastic games. In: Lahiri, S.K., Wang, C. (eds.) {CAV} 2020. vol. 12225, pp. 349--371. Springer (2020). \doi{10.1007/978-3-030-53291-8\_19}

\bibitem{artifact}
Phalakarn, K., Tsai, Y.C., Hasuo, I.: {ATVA} 2025 artifact: Widest path games and maximality inheritance in bounded value iteration for stochastic games (2025). \doi{10.5281/zenodo.15845366}

\bibitem{DBLP:conf/cav/QuatmannK18}
Quatmann, T., Katoen, J.: Sound value iteration. In: Chockler, H., Weissenbacher, G. (eds.) {CAV} 2018. vol. 10981, pp. 643--661. Springer (2018). \doi{10.1007/978-3-319-96145-3\_37}

\bibitem{DBLP:journals/ejcon/SvorenovaK16}
Svorenov{\'{a}}, M., Kwiatkowska, M.: Quantitative verification and strategy synthesis for stochastic games. Eur. J. Control  \textbf{30},  15--30 (2016). \doi{10.1016/J.EJCON.2016.04.009}

\bibitem{tarski}
Tarski, A.: {A lattice-theoretical fixpoint theorem and its applications.} Pacific Journal of Mathematics  \textbf{5}(2),  285--309 (1955)

\end{thebibliography}

\end{document}